\theoremstyle{definition}
\newtheorem{theorem}{Theorem}
\newtheorem{prop}{Proposition}
\newtheorem{definition}{Definition}
\newtheorem{rem}{Remark}
\newcommand{\ie}{\textit{i}. \textit{e}.}
\newcommand{\sth}{\textit{s}. \textit{th}.}
\newcommand{\nr}[1]{{\color{red} \bf Note~:~#1}}
\definecolor{maroon}{RGB}{100,20,20}
\definecolor{dblue}{RGB}{20,20,100}
\begin{document}
\title{Quantum operations restricted by no faster-than-light
communication principle and generic emergence of objectivity
in position basis}
\author{Rajendra Singh Bhati}
\email{ph16076@iisermohali.ac.in}
\affiliation{Department of Physical Sciences, Indian
Institute of Science Education and Research (IISER) Mohali,
Sector 81 SAS Nagar, Manauli PO 140306 Punjab India}
\author{Arvind}
\email{arvind@iisermohali.ac.in}
\affiliation{Department of Physical Sciences, Indian
Institute of Science Education and Research (IISER) Mohali,
Sector 81 SAS Nagar, Manauli PO 140306 Punjab India}
\affiliation{Punjabi University, Patiala, 147002, Punjab india}

\begin{abstract}
The emergence of the objective classical world from the quantum
behavior of microscopic constituents is not fully understood.
Models based on decoherence and the principle of quantum
Darwinism, which attempt to provide such an explanation,
require system-bath interactions in a preferred basis. Thus
the generic emergence of objectivity in the position basis,
as observed in the real world remains unexplained.
In this Letter, we present a no-go theorem based
on the principle of no-faster-than-light communication,
showing that interactions between internal degrees of
freedom unavoidably cause system wave functions to branch
in the position basis. We
apply this result to
a spin decoherence model to demonstrate that a generic
thermal spin-$1/2$ bath redundantly records information
about the position of a spin-$1/2$
particle. Notably, the
model does not assume any preferred spin interaction. These
findings represent a compelling demonstration of the generic
emergence of objectivity in the position basis.
\end{abstract} 
\maketitle
\paragraph*{Introduction.---}
The principles of quantum theory imply that superposition of
distinct classical
possibilities is feasible at the atomic level. 
Further, the quantum states of composite systems in the
tensor product space involve superposition of classical
possibilities existing at  arbitrarily  large scales. Yet,
we do not directly witness large objects  exhibiting quantum
effects in our everyday observation of physical reality. 
Despite the universality of quantum theory, there
appears to be a quantum-to-classical transition taking place
when systems become macroscopic or they interact with
macroscopic systems and after this transition the system
behaves in a classically objective manner.
Physicists over a long
period have grappled with
the challenge of comprehending the quantum-to-classical
transition and the emergence of classical objectivity within the
framework of quantum theory. In recent decades,
substantial progress has been achieved in unraveling this
mystery and the concept of quantum
Darwinism has played a crucial role in in this process
~\cite{e_24111520,Zurek_2009,10.1063/PT.3.2550}.

Quantum Darwinism (QD) employs quantum
decoherence~\cite{RevModPhys_75.715,PhysRevLett_90.120404},
which is now a well-understood phenomenon, to explain the
emergence of objectivity. The central idea of QD is
that environment not only decoheres but also actively
records and proliferates the information about the system
attributes~\cite{PhysRevLett_93.220401,PhysRevA_72.042113,
PhysRevA_73.062310,PhysRevA.93.032126}.
The emergence of objectivity is identified with the
simultaneous maximization of the quantum mutual information
between an observable of the system and multiple
environment fragments~\cite{PhysRevLett_93.220401,PhysRevA_72.042113}.
Alternatively and equivalently, a specific
post-decoherence classical-quantum (CQ) state of the system
and environment, known as the spectrum broadcast structure,
can indicate
objectivity~\cite{PhysRevA_91.032122,PhysRevLett_118.120402,
Korbicz2021roadstoobjectivity,PhysRevLett.122.010403}.

While core predictions of QD are generic and thus are
independent of specific interaction
dynamics~\cite{Brandao2015,PhysRevLett.121.160401,Qi2021emergent},
there is a specific set of commuting observables, not always
easy to obtain, whose information gets  objectified by its
redundant recording on environment
fragments~\cite{PhysRevA.103.042210}. In this context,
classical objectivity has been shown to emerge in spin-spin
interaction
models~\cite{PhysRevLett_93.220401,PhysRevA_72.042113,Blume-Kohout_2005,
PhysRevLett_128.010401,Zwolak2016,PhysRevLett_112.140406},
dielectric illuminated
spheres~\cite{PhysRevLett_105.020404,PhysRevLett_112.120402,
Jess_2011,PhysRevA_91.032122} and quantum Brownian
motion~\cite{Tuziemski_2015,PhysRevLett_101.240405,
PhysRevA.80.042111,Tuziemski_2016,photonics2010228} with
ideal environments.  More practical decoherence models,
which relax various ideal conditions on the initial state of
the environment and the interaction Hamiltonian, have also
demonstrated
QD~\cite{PhysRevLett_103.110402,PhysRevA_81.062110,Zwolak_2016,
e_23111377,Jess_Riedel_2012,PhysRevA.99.042103,
PhysRevA_98.022124,RYAN2021127675,PhysRevResearch.2.012061,
PhysRevResearch.2.013164,PhysRevA.104.042216,PhysRevA.96.012120,
Balaneskovic2015,Balaneskovic2016}.  The predictions of QD
for certain models have also been experimentally
verified~\cite{GP2020,PhysRevLett.123.140402,CHEN2019580,
PhysRevA.98.020101,PhysRevLett.101.024102,PhysRevLett.104.176801}.

The aforementioned models have several limitations; they
exhibit preferred system observables in their interaction
dynamics, they do not capture the realistic universality
that the objectivity emerges in position bases and they do
not work with initial states and interaction dynamics that
are entirely random and generic.

This Letter presents a no-go theorem that imposes restricts
on interactions between two quantum
systems. More specifically, we demonstrate that
the manipulation of the internal
degrees of freedom of a system inevitably disrupts its
spatial wave function. These results
are proven using the no-faster-than-light communication
principle (NFLCP), which makes the theorem fundamental and
generic. Since, this no-go result
imposes particular constraints on
the interactions between quantum systems, we
utilize it to construct a generic decoherence model.
We observe that in this scenario the classical
objectivity emerges in the position basis irrespective of
the interaction dynamics of the internal degree of freedom.
Further, we make minimal assumptions about the initial
state of the environment and the interaction dynamics.

\paragraph*{No-go theorem.---} Before stating the no-go
result, let us define NFLCP in the information theoretic
framework.
\begin{definition}[NFLCP]
Let's suppose Alice generates a random bit string $A$ within
the space-time region $E_A$ and inputs it into a black box
$\mathbf{A}$. Similarly, Bob generates a bit string $B$ as
the output from a black box $\mathbf{B}$ within the
space-time region $E_B$. If $E_A$ and $E_B$ are separated by
a space-like interval, then $\mathcal{I}(A:B)=0$, where
$\mathcal{I}(A:B)$ represents the mutual information between
the strings $A$ and $B$.
\end{definition}
In non-relativistic quantum theory,
internal degrees of freedom of a particle are regarded
as independent physical systems. As a consequence,
while examining the dynamics of 
the spin of a
particle there is no necessity to
explicitly account for the spatial
wavefunction. Nonetheless, as we shall
see our
demonstration reveals the impossibility of consistently
describing dynamics without taking the spatial degree into
consideration.

\begin{figure}
\begin{center}
\includegraphics[scale=0.45]{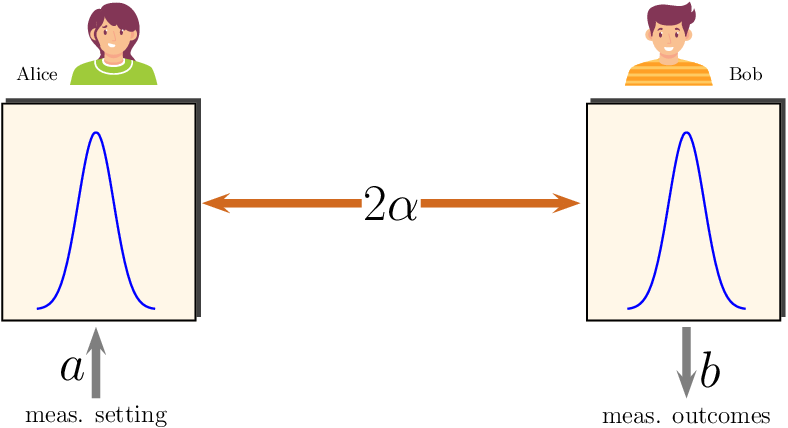}
\caption{Alice and Bob are stationed in two laboratories
separated by distance $2\alpha$. They share a common quantum
particle which is simultaneously present in both
laboratories. Alice chooses an operation based on randomly
generated bit $a$ and performs it on the internal degree of
freedom of the particle. Bob performs a fixed measurement on
the internal degree of freedom and registers the outcome as
a bit $b$.}
\label{nogo_fig}
\end{center}
\end{figure}

\begin{theorem}[No-go theorem]
\label{no-go_th}
If we have the Hilbert spaces $\mathcal{H}_S$ and
$\mathcal{H}_I$ associated with the spatial and internal
degrees of freedom of a quantum particle, respectively, then
the unitaries $U=\mathds{1}_S\otimes U_I$ and measurements
of the form
$\mathbb{M}\equiv\{\mathds{1}_S
\otimes\Pi_I,\mathds{1}_S\otimes\tilde{\Pi}_I\}$
on the state space $\mathcal{H}_S\otimes\mathcal{H}_I$ are
unallowed  by NFLCP. Here, $\Pi_I$ and $\tilde{\Pi}_I$ are
projection operators, and their sum satisfies
$\Pi_I+\tilde{\Pi}_I=\mathds{1}_I$.

\end{theorem}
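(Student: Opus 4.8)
The plan is to argue by contradiction: assuming that operations of these forms are legitimate physical operations which Alice and Bob can each implement locally within their own laboratory, I will exhibit an explicit signaling protocol that produces correlations between spacelike-separated events, in violation of the NFLCP stated above. The essential observation is that an operator of the form $\mathds{1}_S\otimes U_I$ (and likewise a measurement element $\mathds{1}_S\otimes\Pi_I$) is \emph{independent of the spatial coordinate}, so whichever laboratory nominally ``performs'' it, the operator that actually acts on the joint state is one and the same global object, acting on the internal state of the entire delocalized particle at once.

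Concretely, I would prepare the shared particle in a spatial wavefunction delocalized across the two laboratories, $|\psi_S\rangle=\tfrac{1}{\sqrt{2}}(|L\rangle+|R\rangle)$, with $|L\rangle$ ($|R\rangle$) supported in Alice's (Bob's) lab at $x=-\alpha$ ($x=+\alpha$), together with a fixed internal state $|\chi_I\rangle$, giving $|\Psi\rangle=|\psi_S\rangle\otimes|\chi_I\rangle$. Alice encodes her random bit $a$ by applying $U=\mathds{1}_S\otimes U_I$ when $a=1$ and doing nothing when $a=0$; Bob then performs the fixed measurement $\mathbb{M}$ and records $b$. Since the spatial factor is untouched and normalized, it drops out of Bob's outcome statistics, leaving $P(b{=}1\,|\,a{=}0)=\langle\chi_I|\Pi_I|\chi_I\rangle$ and $P(b{=}1\,|\,a{=}1)=\langle\chi_I|U_I^\dagger\Pi_I U_I|\chi_I\rangle$.

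These two probabilities differ for a generic choice---for a spin-$1/2$ internal space one may take $|\chi_I\rangle=|{\uparrow}\rangle$, $\Pi_I=|{\uparrow}\rangle\langle{\uparrow}|$ and $U_I$ a rotation to the $x$-axis, giving values $1$ and $1/2$---so $b$ is correlated with $a$ and $\mathcal{I}(A:B)>0$ (a single bit already suffices; repeating over a bit string only sharpens the violation). Because $E_A$ and $E_B$ are spacelike separated, this contradicts the NFLCP, and since both the unitary and the measurement enter on an equal footing as position-independent internal operations, neither can be admissible; running the roles of Alice and Bob in reverse confirms the symmetric conclusion.

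The step I expect to be the crux is conceptual rather than algebraic: carefully justifying that a position-independent internal operation, even when nominally ``performed in Alice's lab,'' genuinely acts on the portion of the delocalized particle sitting in Bob's lab, so that the spatial degree cannot simply be ignored. Making this precise---as opposed to the tacit non-relativistic assumption that the internal degree is an autonomous system---is exactly what forces any admissible manipulation of the internal degree to be spatially conditioned (e.g.\ of the form $|L\rangle\langle L|\otimes U_I+|R\rangle\langle R|\otimes\mathds{1}_I$) and hence to disturb the spatial wavefunction, which is the physical content of the theorem.
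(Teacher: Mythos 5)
Your argument for the unitary case is correct and is essentially the paper's own first protocol: a particle delocalized over the two labs, the crucial physical premise that a position-independent operator $\mathds{1}_S\otimes U_I$ acts on the internal state of the whole delocalized particle at once, Alice encoding $a$ in a choice of internal unitary, Bob extracting it with a fixed internal measurement, giving $\mathcal{I}(A:B)>0$ across a spacelike interval (the paper's choice $U_1=\mathds{1}\otimes\sigma_x$ with Bob measuring $\sigma_z$ gives $\mathcal{I}(A:B)=1$; your tilted example gives a smaller but still positive value, which suffices). Your closing paragraph also correctly anticipates the paper's resolution, namely that admissible operations must be spatially conditioned, e.g.\ $\ketbra{-\alpha}_S\otimes U_I+(\mathds{1}-\ketbra{-\alpha})_S\otimes\mathds{1}_I$, exactly as stated after the theorem and in the paper's Appendix A.

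There is, however, a genuine gap in your treatment of the measurement half of the theorem. The statement asserts that measurements of the form $\mathbb{M}\equiv\{\mathds{1}_S\otimes\Pi_I,\mathds{1}_S\otimes\tilde{\Pi}_I\}$ are themselves disallowed, but your single protocol employs one operation of each kind --- Alice's forbidden unitary and Bob's forbidden measurement --- so the contradiction only shows that the two cannot \emph{both} be admissible; it does not indict the measurements on their own. Your two patching remarks do not close this: ``running the roles in reverse'' still places a unitary on one side and a measurement on the other, and ``equal footing'' is not a derivation, since a nonselective measurement is not a unitary on $\mathcal{H}_I$ but a dephasing channel whose effect on Bob's statistics must be computed separately. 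The paper supplies precisely this missing piece as a second, measurement-only protocol: for $a=1$ Alice performs $\mathbb{M}_1=\left\{\mathds{1}\otimes\ketbra{+},\mathds{1}\otimes\ketbra{-}\right\}$ without recording the outcome, which dephases the spin from $\ketbra{0}$ to $\mathds{1}/2$, so Bob's $\sigma_z$ outcome changes from deterministic to uniform and $\mathcal{I}(A:B)>0$ (the paper quotes $\approx 0.19$); since this protocol uses only measurements of the forbidden form (plus the identity), it pins the NFLCP violation on the measurements themselves. (One residual asymmetry --- that the unitary can never be indicted entirely on its own, because extracting information always requires some measurement --- is shared with the paper's proof and is not a defect specific to yours.) Adding the measurement-only protocol would complete your proof; without it, the measurement clause of the theorem is unproven.
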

\begin{proof}
We prove the theorem by contradiction.
Let us consider,
a spin half particle where the internal degree of freedom is a
two-level quantum system. Using a thought experiment, we
demonstrate that the operations mentioned in the theorem
violate the NFLCP.

Let us consider that the spin half particle is prepared in a state
where the spatial wavefunction extends over a large distance
and is a 
superposition of being located in the labs of observers
Alice and Bob, positioned at $x=-\alpha$ and
$x=\alpha$, respectively (see Fig.~\ref{nogo_fig}).
Additionally, let's assume that the  spin is prepared in the state
$\ket{0}$. Specifically, the composite state of the
particle, describing its spatial and spin degrees of
freedom, is given by
$\ket{\Psi}_{SI}=\ket{\psi}_S\otimes\ket{0}_I\in
\mathcal{H}_S\otimes\mathcal{H}_I$ such that
\begin{equation}
\braket{x}{\psi}=
N\left[\exp\left(-\frac{(x-\alpha)^2}{4\sigma^2}\right)+
\exp\left(-\frac{(x+\alpha)^2}{4\sigma^2}\right)\right]
\end{equation}
where $N$ is the normalization factor and $\sigma\ll\alpha$.
In this case, we make the assumption that the distance
between the two labs is
large in the sense that if $T$ is the typical operation
time for any process in the labs of Alice and Bob then $T
\ll \alpha/c$ (here $c$ is the velocity of light).

Given that the particle is spread across these two
labs, it effectively functions as a long black box that can
be accessed by observers in both the
labs. It is pertinent to note
that quantum mechanics does not attribute a notion of
`physical space' to the spin degree of freedom.
Consequently, it is reasonable to assume that the particle's
spin is accessible wherever its wavefunction is nonzero.
We use the scenario to prove the theorem.

Consider a situation where Alice wants to send  a classical bit of
information $a$  to Bob. 
Imagine a specific protocol where she applies an $a$
dependent 
operation $U_{a\in\{0,1\}}$ on the internal degree of
freedom of the partice in an initial state
$\ket{\Psi}_{SI}$.
Let us assume that
$U_0=\mathds{1}\otimes\mathds{1}$ and
$U_1=\mathds{1}\otimes\sigma_x$. 
Bob measures $\sigma_{z}$
and records the outcome as a bit $b\in\{0,1\}$.  Now assume
that $\tau$ is the time difference between Alice's bit
generation (call it event $E_A$) and Bob's act of recording
his measurement outcome (event $E_B$) \sth~ $c\tau\ll
2\alpha$. However, we
observe that $\mathcal{I}(A:B)=1$.  This contradicts NFLCP.

Similarly, consider another protocol for sending information
about the classical bit $a$ from Alice to Bob where Alice
performs $a$ dependent measurements on the particle.
Alice performs the
measurement $\mathbb{M}_{a\in{0,1}}$
with 
$\mathbb{M}_0\equiv\{\mathds{1}\otimes\mathds{1}\}$, meaning
she does not disturb the particle state, and
$\mathbb{M}_1\equiv\left\{\mathds{1}\otimes\ketbra{+},\mathds{1}\otimes\ketbra{-}\right\}$.
As before, Bob measures the complementary observable
$\sigma_z$ and records the outcome as a bit $b\in\{0,1\}$.
In this case, we find that $\mathcal{I}(A:B)=0.19$. Assuming
space-like separation of events of
measurements performed by Alice and Bob, this contradicts the NFLCP.

In this analysis, we have examined specific
instances of $U_I$ and $\Pi_I$. Nevertheless, it is apparent
that the obtained results can be extended to encompass
general operations.  
\end{proof}

A direct implication of Theorem~\ref{no-go_th} is that
quantum maps consistent with NFLCP must incorporate the
space-time region where the interactions take place. Let's
consider Alice's action of applying unitary $U$ or
measurement $\mathbb{M}$ as an event $(-\alpha,t_A)$ occuring
in the lab. In order to ensure the consistency of these
operations with NFLCP, we can express the unitary $U$
as $U=\ketbra{-\alpha}_S\otimes
U_I+(\mathds{1}-\ketbra{-\alpha})_S\otimes\mathds{1}_I$.
Similary the measurement operation $\mathbb{M}$,
occurring at her location, can be
represented as
$\mathbb{M}\equiv\{\ketbra{-\alpha}_S\otimes\Pi_I,
\ketbra{-\alpha}_S
\otimes\tilde{\Pi}_I,(\mathds{1}-\ketbra{-\alpha})_S\otimes\mathds{1}_I\}$. 
In both the cases the rationale being that these
operations are being performed in Alice's lab alone and there is
no information about these operations available outside her
lab.

\begin{comment}
	\nr{Implications of the above should be examined for the
	Bell violation scenarios. {\color{green} Bell scenarios have
	two inputs and two outputs case. Here, only one side gives
	inputs while the other side has only fixed setting that
	generates random outcomes.} Well I agree that Bell scenario
	is different, however, the no-go theorem should also apply
	to the two particle Bell state and the projective
	measurements on that individual particles. It will be useful
	to make a statement about that. For example we could add
	something along the following lines ``Given two particles in a
	Bell state, since one particle is with Alice and the other
	particle is with Bob our no-go theorem does not restrict the
	measurement of spin components or application of local
	unitary operations''}
\end{comment}

The above scenario has an intriguing explanation in the
many-world interpretation
\cite{RevModPhys.29.454,DeWitt1970}: The ontology of the
world always branches in
the position basis, and transformations on internal degrees
happen accordingly. In the case considered above, for
instance, the unitary $U_I$ or the measurement
$\{\Pi,\tilde{\Pi}\}$ on $\mathcal{H}_I$ takes place in the
world where the particle is present at $-\alpha$. In all
other worlds, the internal degree remains unchanged.
Moreover, when arbitrary and repeated interactions occur
with an internal degree across different locations, it
results in a continuous branching in the position basis.
This branching process effectively resolves the preferred
basis
problem~\cite{Barrett2005-BARTPP,PhysRevD.24.1516,Inamori}.
Additionally, as we will explore further,
within an arbitrary spin environment model, the former leads
to the emergence of classical objectivity in the position
basis.

\paragraph*{The model.---}
A spin-$1/2$ quantum particle
denoted by
$\mathcal{S}$ is assumed to be in a spatial superposition of
being at $d$ possible locations $\{\vec{x}_1, \vec{x}_2,
\ldots, \vec{x}_d\} \equiv \mathbb{X}$. Let the initial
state of $\mathcal{S}$ be $\varrho_\mathcal{S} =
\ketbra{\Psi}_{\mathcal{S}} \otimes \rho_{\mathcal{S}} \in
\mathcal{H}_{\mathcal{S}}^x \otimes
\mathcal{H}_{\mathcal{S}}$, where $\ket{\Psi}_{\mathcal{S}}
= \sum_{i=1}^{d} \alpha_i \ket{\vec{x}_i}$ such that
$\sum_{i=1}^d\Vert\alpha_i\Vert^2=1$ represents the
spatial state and
$\rho_{\mathcal{S}}$ represents the spin state
of ${\mathcal{S}}$. Here,
$\mathcal{H}_{\mathcal{S}}^x$ and
$\mathcal{H}_{\mathcal{S}}$ are the Hilbert spaces
associated with the spatial and spin degrees of freedom of
the system.  The spin of the system $\mathcal{S}$ interacts
with an environment denoted as $\mathcal{E}$, which is
composed of point-like spin-$1/2$ particles fixed in
position. The interactions among environmental
subsystems (en-subs) are assumed to be absent. Let
$\mathcal{H}^x_{\mathcal{E}_i}$ and
$\mathcal{H}_{\mathcal{E}_i}$ denote the spatial and spin
state spaces of $i$-th en-sub $\mathcal{E}_i$. Initially,
each en-sub is in a random spin state and at a random
location: $\varrho_{\mathcal{E}_i} =
\ketbra{\vec{x}}_{\mathcal{E}_i} \otimes
\rho_{\mathcal{E}_i}\in\mathcal{H}_{\mathcal{E}_i}^x \otimes
\mathcal{H}_{\mathcal{E}_i}$, where $\vec{x} \in \mathbb{X}$
and $\rho_{\mathcal{E}_i}$ is an arbitrary spin state.
Furthermore, we assume that the spin-spin interaction
between ${\mathcal{S}}$ and $\mathcal{E}_i$ is arbitrary and
generic. The spin-spin interaction Hamiltonian in this
model, when the system and all en-subs are localized at
$\vec{x}\in\mathbb{X}$, is given as:

\begin{equation}
H_{{\mathcal{S}}:\mathcal{E}}(x)=
-\sum_{i=1}^N
g_i(x,t)\sigma_{\mathcal{S}}^i(x)\otimes\sigma_{\mathcal{E}_i}(x)
\bigotimes_{j\neq i}\mathds{1}_{\mathcal{E}_j},
\label{spin-spin}
\end{equation}
where $N$ is the number of en-subs,
$\sigma_{\mathcal{S}}^i(x)$ and $\sigma_{\mathcal{E}_i}(x)$
are random spin observables of ${\mathcal{S}}$ and
$\mathcal{E}_i$, respectively, when they interact at
position $\vec{x}$, and $g_i(x,t)$ is a function of time
that quantifies the interaction strength.  It is important
to note that, unlike all the previous spin models, we do not
assume any preferred spin observable for the system. In
fact, we have considered that an en-sub can couple to
different spin observable of the
system at different locations.

\begin{figure}
\begin{center}
\includegraphics[scale=0.38]{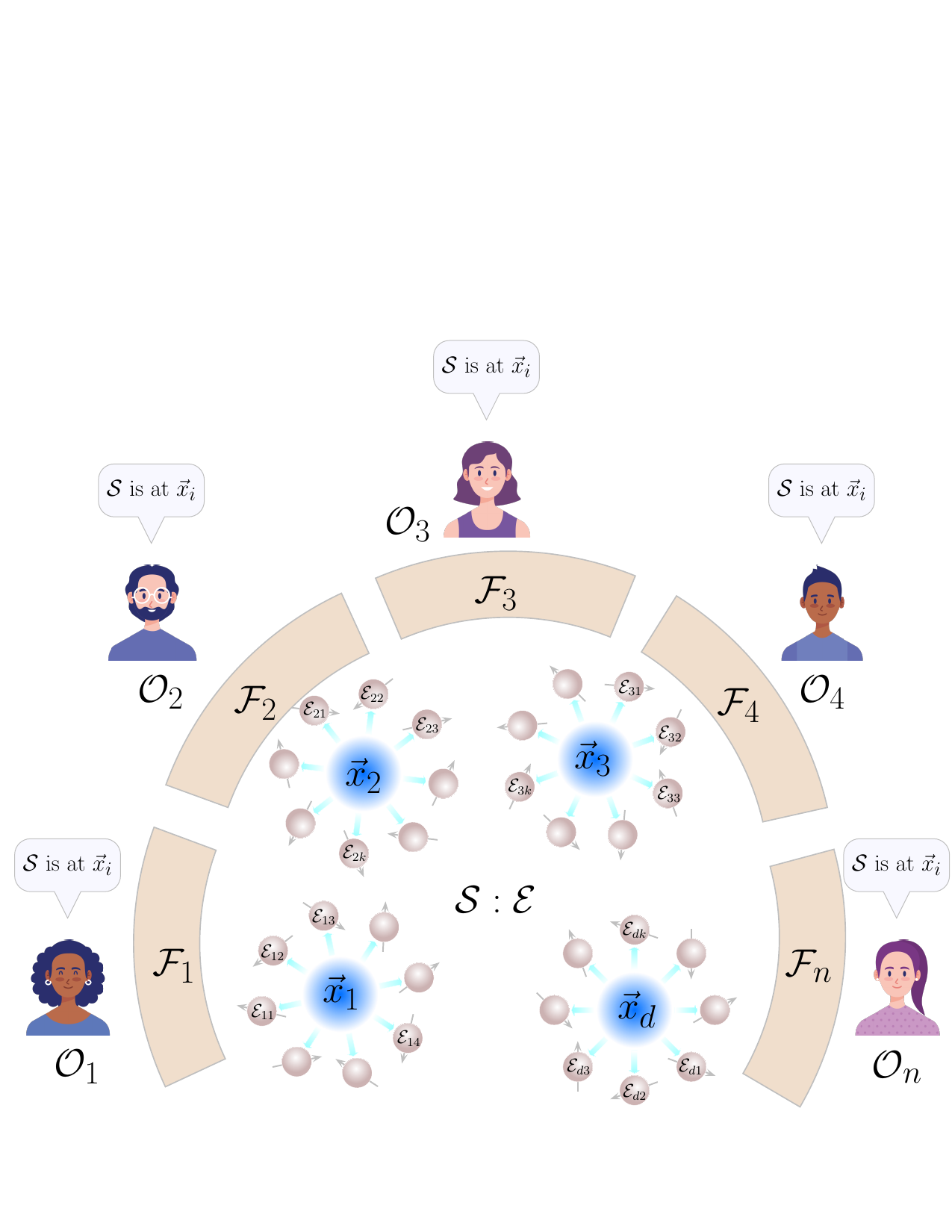}
\caption{The system $\mathcal{S}$ begins in a superposition
of different locations
$\vec{x}_1,\vec{x}_2,\cdots,\vec{x}_d$. The environment
subsystems (en-subs) $\{\mathcal{E}_{ij}\}$ are spin-$1/2$
particles localized at specific positions. En-subs near
$\vec{x}_i$ form the macro-fraction $\mathcal{E}_{i}^{mac}$.
$\mathcal{S}$ decoheres via spin-spin interaction with
en-subs. Fragments $\{\mathcal{F}_k\}$ consist of arbitrary
and uniformly random portion of the entire environment
$\mathcal{E}$, meaning $\mathcal{F}_k$ for all $k$ contains
en-subs from all macro-fractions in equal proportions.
Observers $\mathcal{O}_1,\mathcal{O}_2,\cdots,\mathcal{O}_n$
gain objective information about the position of
$\mathcal{S}$ after probing fragments
$\mathcal{F}_1,\mathcal{F}_2,\cdots,\mathcal{F}_n$,
respectively.}
\label{deco_fig}
\end{center}
\end{figure}

\paragraph*{Emergence of objectivity.---}
We demonstrate that the dynamics governed by
$H_{{\mathcal{S}}:\mathcal{E}}$ leads to the formation of a
spectrum broadcast structure in the system's position basis.

The initial state of the composite system is
$\varrho_{{\mathcal{S}}:\mathcal{E}} = \varrho_{\mathcal{S}}
\bigotimes_{i=1}^N \varrho_{\mathcal{E}_i}$.  Let us divide
$\mathcal{E}$ into $d$ macro-fractions
$\{\mathcal{E}^{mac}_k\}_{k\in\{1,2,\cdots,d\}}$ based on
the locations of the constituent en-subs
$\{\mathcal{E}_i\}_{i\in\{1,2,\cdots,N\}}$: all en-subs
located at $\vec{x}_k$ form the macro-fraction
$\mathcal{E}^{mac}_k$. Furthermore, let us re-index the
en-subs accordingly: the $l$-th en-sub in the $k$-th
macro-fraction $\mathcal{E}^{mac}_k$ is denoted by
$\mathcal{E}_{kl}$, where $l=1,2,\cdots,m_k$. Here, $m_k$
represents the total number of en-subs in the $k$-th
macro-fraction. See Fig.~\ref{deco_fig} for details.

Incorporating the spatial degree of freedom, the unitary
operator representing the interaction between
${\mathcal{S}}$ and $\mathcal{E}_{kl}$, in accordance with
the no-go result, can be formulated as:
\begin{equation}
\label{U_complete}
\begin{aligned}
\bar{U}_{{\mathcal{S}}:\mathcal{E}_{kl}}&=
\sum_{i=j}\ketbra{\vec{x}_i}_{\mathcal{S}}
\otimes\ketbra{\vec{x}_j}_{\mathcal{E}_{kl}}
\otimes{U_{{\mathcal{S}}:\mathcal{E}_{kl}}}(x_i)
\\
&+\sum_{i\neq j}\ketbra{\vec{x}_i}_{\mathcal{S}}
\otimes\ketbra{\vec{x}_j}_{\mathcal{E}_{kl}}
\otimes{\mathds{1}}_{{\mathcal{S}}:\mathcal{E}_{kl}}. \\
\end{aligned}
\end{equation}
Here, the spin-spin interaction of
${\mathcal{S}}:\mathcal{E}_{kl}$ at $\vec{x}_i$ is
calculated using
the interaction Hamiltoian described in Eq.(\ref{spin-spin})
and is represented by the unitary operator as:
\begin{equation}
U_{{\mathcal{S}}:\mathcal{E}_{kl}}(x_i)=
\exp{\iota\theta_{kl}(x_i)\sigma_{\mathcal{S}}^{kl}(x_i)\otimes\sigma_{\mathcal{E}_{kl}}(x_i)},
\end{equation}
where, $\theta_{kl}(x_i)=\int g_{kl}(x_i,t)dt$ is the
interaction strength.  We can make the assumption that the
en-subs interact with the system one-by-one, without
compromising generality. The complete interaction between
$\mathcal{S}$ and $\mathcal{E}$ is given by
$\bar{U}_{\mathcal{S}:\mathcal{E}}=
\prod_{i,j}\bar{U}_{{\mathcal{S}}:\mathcal{E}^{mac}_{ij}}\bigotimes_{k\neq
i,l\neq j}\mathds{1}_{\mathcal{E}^{mac}_{kl}}$. Let
$\mathcal{E}_{\setminus\tilde{\mathcal{E}}}$ represent the
environment after tracing out a subenvironment
$\tilde{\mathcal{E}}$ and, similarly, let
${\mathcal{E}^{mac}_{i\setminus\tilde{\mathcal{E}}_i}}$ be
the $i$-th macro-fraction after discarding a portion of it
that is denoted by $\tilde{\mathcal{E}}_i$. Tracing out
$\tilde{\mathcal{E}}_i$, for all $i$, decoheres the
composite system.  After tracing out the spatial degrees of
en-subs, the post-interaction state,
$\varrho^{\prime}_{{\mathcal{S}}:\mathcal{E}_{\setminus{\tilde{\mathcal{E}}}}}=
\Tr_{\tilde{\mathcal{E}}}\left(\bar{U}_{{\mathcal{S}}:
\mathcal{E}}\varrho_{{\mathcal{S}}:\mathcal{E}}
\bar{U}_{{\mathcal{S}}:\mathcal{E}}^\dagger\right)$,
for sufficiently large $\tilde{\mathcal{E}_i}\;\forall
i\in\{1,2,\cdots,d\}$ is obtained as (see \textit{supp. mat.}):
\begin{equation}
\label{broadcast_macro}
\varrho^{\prime}_{{\mathcal{S}}:\mathcal{E}_{\setminus\tilde{\mathcal{E}}}}
\approx\sum_{i=1}^{d}\Vert\alpha_i\Vert^2\ketbra{\vec{x}_i}_{\mathcal{S}}
\otimes\rho^{\prime}_{{\mathcal{S}}:{\mathcal{E}^{mac}_{i{\setminus\tilde{\mathcal{E}}_{i}}}}}
\bigotimes_{{j\neq i}}\rho_{\mathcal{E}^{mac}_{j{\setminus\tilde{\mathcal{E}}_{j}}}}.
\end{equation}
Here, portions $\{\tilde{\mathcal{E}_i}\}$ constitute the
whole of $\tilde{\mathcal{E}}$,
$\rho_{\mathcal{E}^{mac}_{j{\setminus\tilde{\mathcal{E}}_{j}}}}$
across all $j$ is the initial state of $j$-th macro-fraction
minus $\tilde{\mathcal{E}}_j$, and
\begin{equation}
\label{post_int_macro}
\begin{aligned}
\rho^{\prime}_{{\mathcal{S}}:\mathcal{E}^{mac}_{i\setminus\tilde{\mathcal{E}}_i}}
\approx\frac{1}{2}\mathds{1}_{\mathcal{S}}
\bigotimes_{j:\mathcal{E}_{ij}\in\mathcal{E}^{mac}_{i\setminus\tilde{\mathcal{E}}_i}}
\rho^\prime_{\mathcal{E}_{ij}}+\Omega_{{\mathcal{S}}:\mathcal{E}^{mac}_{i\setminus\tilde{\mathcal{E}}_i}}
\end{aligned}
\end{equation}
is the composite spin state of system and
$\mathcal{E}^{mac}_i$ after decoherence, where
$\rho^\prime_{\mathcal{E}_{ij}}=\cos^2(\theta_{ij})\rho_{\mathcal{E}_{ij}}
+\sin^2(\theta_{ij})\tilde{\rho}_{\mathcal{E}_{ij}}$
with
$\tilde{\rho}_{\mathcal{E}_{ij}}=\sigma_{\mathcal{E}_{ij}}
{\rho}_{\mathcal{E}_{ij}}\sigma^\dagger_{\mathcal{E}_{ij}}$
and the form of
$\Omega_{{\mathcal{S}}:\mathcal{E}^{mac}_{i\setminus\tilde{\mathcal{E}}_i}}$
is such that
$\Tr_{\mathcal{S}}\left(\Omega_{{\mathcal{S}}:
\mathcal{E}^{mac}_{i\setminus\tilde{\mathcal{E}}_i}}\right)=0$.

For clarity, the notation $x_i$ has been omitted from the
above expression for obvious reasons. Therefore, after
discarding the system's spin, the spin state of $i$-th
macro-fraction becomes
$\rho^{\prime}_{\mathcal{E}^{mac}_i}=\bigotimes_{j}\rho^\prime_{\mathcal{E}_{ij}}$. Index $j$ runs over the remaining en-subs of
$\mathcal{E}^{mac}_{i}$, that is,
$\mathcal{E}_{ij}\in\mathcal{E}^{mac}_{i\setminus\tilde{\mathcal{E}}_i}$.
Each en-sub in
$\mathcal{E}^{mac}_{i\setminus\tilde{\mathcal{E}}_{i}}$
contains a fraction of the information regarding the
system's presence (or absence) at $\vec{x}_i$.
However, it is worth noting that the macro-fraction
state
$\rho^{\prime}_{\mathcal{E}^{mac}_{i{\setminus\tilde{\mathcal{E}}_{i}}}}$,
when it is of sufficiently
large size, can contain redundant information
about the position.
Next we shall proceed to fragment the
environment in a manner that allows each fragment to carry
the complete classical information about the position of the
system.

Suppose there are $n$ observers, denoted as $\mathcal{O}_1,
\mathcal{O}_2, \dots, \mathcal{O}_n$, who have access to
different portions with equal proportions of each
macro-fraction
$\{\mathcal{E}^{mac}_i\}_{i\in\{1,2,\dots,d\}}$ within the
environment $\mathcal{E}$. We can represent these fragments
as $\mathcal{F}_1, \mathcal{F}_2, \dots, \mathcal{F}_n$ (see
Fig.~\ref{deco_fig}). The revised form of
Eq.~\eqref{broadcast_macro}, after tracing out the system's
spin, is as follows (see \textit{supp. mat.}):

\begin{equation}
\label{broadcast_fragments}
\varrho^{\prime}_{{\mathcal{S}}:\mathcal{E}_{\setminus\tilde{\mathcal{E}}}}=
\sum_{i=1}^{d}\Vert\alpha_i\Vert^2\ketbra{\vec{x}_i}_{\mathcal{S}}
\otimes{\xi^{\mathcal{F}_1}_i}\otimes{\xi^{\mathcal{F}_2}_i}
\otimes{\xi^{\mathcal{F}_3}_i}\otimes\cdots\otimes{\xi^{\mathcal{F}_n}_i}.
\end{equation}
Here, $\xi^{\mathcal{F}_k}_i$ represents the spin state of
the $k$-th fragment when the system is localized at
$\vec{x}_i$. This state, $\xi^{\mathcal{F}_k}_i$, is a
product state in which the en-subs present at $\vec{x}_i$
have the post-interaction spin state
$\rho^\prime_{\mathcal{E}_{ij}}$, while the remaining
en-subs are in their initial states
$\rho_{\mathcal{E}_{i^\prime j}}$ (here $i^\prime\neq i$).
The state $\varrho^\prime_{{\mathcal{S}}:\mathcal{E}}$,
Eq.~\eqref{broadcast_fragments}, is spectrum broadcast
structure if $\xi^{\mathcal{F}_k}_i$ and
$\xi^{\mathcal{F}_k}_{i^\prime}$ are perfectly
distinguishable for $i\neq i^\prime$ across all
$k\in\{1,2,\cdots, n\}$.

We use quantum fidelity measure to show the distinguishability
~\cite{UHLMANN1976273,Liang_2019}.
Since fidelity is multiplicative under tensor product~\cite{Jozsa1994}, we obtain:

\begin{equation}
F\left(\xi^{\mathcal{F}_k}_i,\xi^{\mathcal{F}_k}_{i^\prime}\right)=
\prod_{\substack{j\in\{i,i^\prime\}\\
j^\prime:\mathcal{E}_{jj^\prime}\in\mathcal{F}_k}}
F\left(\rho_{\mathcal{E}_{jj^{\prime}}},
\rho^{\prime}_{\mathcal{E}_{jj^\prime}}\right).
\end{equation}
Here, $F(\rho,\sigma)=\Tr\sqrt{\rho^{1/2}\sigma\rho^{1/2}}$
represents the fidelity between states $\rho$ and $\sigma$.
To clarify, the index $j^\prime$ spans all en-subs that
constitute the fragment $\mathcal{F}_k$, and also remember
that $i,i^\prime\in\{1,2,\cdots,d\}$. Since the interactions
are non-zero, meaning $\theta_{jj^\prime}\neq 0$, the
fidelity
$F\left(\rho_{\mathcal{E}_{jj^{\prime}}},
\rho^{\prime}_{\mathcal{E}_{jj^\prime}}\right)<1$.
Consequently, the product effectively approaches zero for
large fragments:
$F\left(\xi^{\mathcal{F}_k}_i,\xi^{\mathcal{F}_k}_{i^\prime}\right)\approx
0$ for $i\neq i^\prime$ across all $k\in\{1,2,\cdots, n\}$.
Thus, states $\xi^{\mathcal{F}_k}_i$ and
$\xi^{\mathcal{F}_k}_{i^\prime}$ become perfectly
distinguishable.

% When the initial spin state $\rho_{\mathcal{S}}$ is not
% maximally mixed ($\rho_{\mathcal{S}} \neq \mathds{1}/2$),
% the state
% $\rho^{\prime}_{{\mathcal{S}}:\mathcal{E}^{mac}_i}$ may
% not have the form of Eq.~\eqref{post_int_macro}. However,
% during the thermalization process, $\rho_{\mathcal{S}}$
% rapidly converges to the Gibbs state of the
% self-Hamiltonian, which serves as the unique fixed point
% of the evolution. Thus, by tracing out a sub-fraction of
% $\mathcal{E}^{mac}_i$ of sufficient size,
% $\rho_{\mathcal{S}}$ transforms into the Gibbs state.
% Notably, in the absence of an external static magnetic
% field, the Gibbs state of a spin corresponds to the
% maximally mixed state. The dynamics thereafter lead to the
% formation of the spectrum broadcast structure.

The structure described in Eq.~\eqref{broadcast_fragments}
represents a complete and redundant encoding of information
about the position of the system $S$ on the environment
spins. Observers gain complete information about
the position of the system when they access spins of
randomly sampled portions of the environment. However, it is necessary
that a significant portion of the environment which decoheres
the system's state is inaccessible. One intriguing aspect of
Eq.~\eqref{broadcast_fragments} is its inherent integration
of the Born probabilities $\{\Vert\alpha_i\Vert^2\}$ 
meaning
all the observers simultaneously observe the
system at $\vec{x}_i$ with probability $\Vert\alpha_i\Vert^2$.
The environment naturally selects, records and collapses the system's
wavefunction in the position basis as per Born rule in
Quantum Darwinian manner.

\paragraph*{Discussion.---}
In this Letter, we have presented an intriguing no-go result
and explored its implications for the generic emergence of
classical objectivity in the position basis. The
significance of the no-go result lies in its relevance to
quantum foundations: it demonstrates that interactions in
internal degrees of freedom are always mediated by the
spatial degree of freedom. Moreover, these interactions
couple the involved systems in the position basis,
effectively preventing
faster-than-light communications.  Our no-go theorem
directly challenges the widely held belief that internal
degrees of freedom can act as isolated physical systems that
can be entirely dissociated from spatial
wavefunctions of the involved particles~\cite{Aharonov_2013,Denkmayr2014}. 

This result bears implications for the
preferred basis problem for the models of emergence of
classical objectivity based on decoherence and QD.
Interactions among internal degrees, which are
pervasive at the atomic and sub-atomic levels, continually
lead to the unavoidable branching of the universe's
wavefunction in the position basis in the spirit of
many worlds interpretation of QM. As a consequence, the
position basis naturally emerges as the preferred
basis.

In our analysis, we
managed to successfully integrate the
constraints imposed by the no-go result into the dynamics of
a spin thermalization model, leading to the objectivity of
the system's spatial degree of freedom in the position
basis. A key strength of our model lies in its genuine
generality, as it avoids any reliance on a preferred spin
observable and preferred interaction Hamiltonian.
Another notable feature is the flexibility regarding the
environment's initial state, which can be arbitrary.
However, we do make certain assumptions: Firstly, we
consider non-interacting environmental spins, and secondly,
we set the self-Hamiltonian to zero. We acknowledge that
relaxing these assumptions requires further investigation
and should be a subject of future work.

Despite these advancements, there are important limitations
that need to be addressed. In our theorem and decoherence
model, we made simplifying assumptions of point-like
particles and interactions with vanishing range. To create a
more comprehensive and practical representation,
Eq.~\eqref{U_complete} should be extended to incorporate
interactions of non-vanishing range, while also considering
the speed limit of propagation for interaction influences.
Lastly, the no-go theorem where we have used NFLCP
within non-relativistic QM should be re-interpreted
and generalized within relativistic quantum mechanics.
Apart from its fundamental implications, such a result
can have potential importance in many-body dynamics and
energetics of quantum measurements.

Authors acknowledge the financial
support from {\sf DST/ICPS/QuST/Theme-1/2019/General}
Project number {\sf Q-68}.

%%%%%%%%%%%%%%%%%%%%%%%%%%%%%%%%%%%%%%%%%%%%%%%%%%%%%%%%%%
%\bibliographystyle{apsrev4-2} 
%\bibliography{ref} 

%apsrev4-2.bst 2019-01-14 (MD) hand-edited version of apsrev4-1.bst
%Control: key (0)
%Control: author (8) initials jnrlst
%Control: editor formatted (1) identically to author
%Control: production of article title (0) allowed
%Control: page (0) single
%Control: year (1) truncated
%Control: production of eprint (0) enabled
%

\pagebreak

%\onecolumngrid

%\begin{center}
%\includegraphics[scale=0.8]{fig.eps}
%\end{center}

% \twocolumngrid

%%%%%%%%%%%%%%%%%%%%%%%%%%%%%%%%%%%%%%%%%%%%%%%%%%%%%%%%%%%%%%%%%%%%%%%%%%%
%%%%%%%%%%%%%%%%%%%%%%%%%%%%%%%%%%%%%%%%%%%%%%%%%%%%%%%%%%%%%%%%%%%%%%%%%%%
%               SUPPLEMENTAL MATERIALS    
%%%%%%%%%%%%%%%%%%%%%%%%%%%%%%%%%%%%%%%%%%%%%%%%%%%%%%%%%%%%%%%%%%%%%%%%%%%
%%%%%%%%%%%%%%%%%%%%%%%%%%%%%%%%%%%%%%%%%%%%%%%%%%%%%%%%%%%%%%%%%%%%%%%%%%%

\clearpage
\newpage
\onecolumngrid
\begin{center}
	\textbf{\large Supplemental Materials:Quantum operations restricted by no faster-than-light communication principle and generic emergence of objectivity in the position basis}
\end{center}

\begin{center}
	{Rajendra Singh Bhati and Arvind}
\end{center}

\begin{center}
	\textit{Department of Physical Sciences, Indian
		Institute of Science Education and
		Research (IISER) Mohali, \\
		Sector 81 SAS Nagar, Manauli PO
		140306 Punjab India}
\end{center}
%%%%%%%%%% Merge with supplemental materials %%%%%%%%%%
%%%%%%%%%% Prefix a "S" to all equations, figures, tables and reset the counter %%%%%%%%%%
\setcounter{equation}{0}
\setcounter{figure}{0}
\setcounter{table}{0}
\setcounter{page}{1}
\makeatletter
\renewcommand{\theequation}{S\arabic{equation}}
\renewcommand{\thefigure}{S\arabic{figure}}
\renewcommand{\bibnumfmt}[1]{[S#1]}
\renewcommand{\citenumfont}[1]{S#1}
%%%%%%%%%% Prefix a "S" to all equations, figures, tables and reset the counter %%%%%%%%%%

\section{Appendix--A: Detailed steps of the Proof of no-go theorem}

Here, we evaluate the mutual entropy $\mathcal{I}(A:B)$ in the thought experiment considered in the proof of the no-go theorem in the main text. We consider both cases one by one.

\begin{itemize}
	\item[(a)] Alice generates a uniformly random bit $a$ as a message to send it to Bob.
	The quantum state representing the bit can be written as
	\begin{equation}
		\label{proof_alice_bit}
		\rho_A=\frac{1}{2}\sum_{a\in\{0,1\}}\ketbra{a}_A.
	\end{equation}
	The classical-quantum (cq) state of Alice's bit and the spin-$1/2$ particle can be expressed as
	\begin{equation}
		\label{proof_cq}
		\rho_{ASI}=\frac{1}{2}\sum_{a\in\{0,1\}}\ketbra{a}_A\otimes\ketbra{\psi}_S\otimes\ketbra{0}_I.
	\end{equation}
	If Alice wants to send a bit $a\in\{0,1\}$ to Bob, she applies an operation $U_a$ on the particle's state
	$\ket{\Psi}_{SI}$. $U_a$ is specified as
	\begin{equation}
		U_a=\left\{
		\begin{array}{rll}
			\mathds{1}\otimes\mathds{1}; & \mbox{if}~ a=0, \\
			\mathds{1}\otimes{\sigma_x}; & \mbox{if}~ a=1. \\
		\end{array}
		\right.
	\end{equation}
	The state after Alice's operation becomes
	\begin{equation}
		\rho^\prime_{ASI}=\frac{1}{2}\sum_{a\in\{0,1\}}\ketbra{a}_A\otimes\ketbra{\psi}_S\otimes\ketbra{a}_I.
	\end{equation}
	Bob measures the operator $\sigma_{z}$ on the electron and records the outcome as a bit $b\in\{0,1\}$.
	After tracing out particle's wavefunction and spin state, we have classical-classical (cc) state of
	Alice and Bob's bits as
	\begin{equation}
		\label{nogo_cc_a}
		\rho_{AB}=\frac{1}{2}\sum_{a,b\in\{0,1\}}\ketbra{a}_A\otimes\ketbra{a}_B.
	\end{equation}
	The mutual entropy for $\rho_{AB}$ is $\mathcal{I}(A:B)=1$.
	
	\item[(b)] Similar to the previous case, Alice generates a bit $a\in\{0,1\}$, the state of which is given by Eq.~\eqref{proof_alice_bit} and the corresponding cq-state of Alice's bit and the particle is given
	by Eq.~\eqref{proof_cq}. To send bit $a=0$, Alice performs $\mathbb{M}_0\mathds{1}_S\otimes\mathds{1}_I$ on the
	particle, meaning she does not disturb its state. If the bit is $a=1$, she performs the following measurement:
	\begin{equation}
		\label{Alice_meas_nogo}
		\mathbb{M}_1=\left\{\mathds{1}\otimes\ketbra{+},\mathds{1}\otimes\ketbra{-}\right\}
	\end{equation}
	Notably, $\mathbb{M}_1$ is a measurement on the internal degree of freedom
	without disturbing the spatial wavefunction.
	The state after Alice's operation becomes
	\begin{equation}
		\begin{aligned}
			\rho^\prime_{ASI}&=\frac{1}{2}\ketbra{0}_A\otimes\ketbra{\psi}_S\otimes\ketbra{0}_I \\ &+\frac{1}{4}\ketbra{1}_A\otimes\ketbra{\psi}_S\otimes\sum_{k\in\{+,-\}}\ketbra{k}_I.
		\end{aligned}
	\end{equation}
	Bob measures $\sigma_{z}$ on the electron and records the outcome as a bit $b\in\{0,1\}$.
	After tracing out particle's wavefunction and spin state, we have classical-classical (cc) state of Alice and Bob's bits as
	\begin{equation}
		\begin{aligned}
			\label{nogo_cc_b}
			\rho_{AB}&=\frac{1}{2}\ketbra{0}_A\otimes\ketbra{0}_B+\frac{1}{4}\ketbra{1}_A\otimes\ketbra{1}_B \\ &+\frac{1}{4}\ketbra{1}_A\otimes\ketbra{0}_B \\
		\end{aligned}
	\end{equation}
	Using Eq.~\eqref{nogo_cc_b}, the mutual information between Alice and Bob is
	\begin{equation}
		\begin{aligned}
			\label{mutual_info_nogo_b}
			\mathcal{I}(A:B)&=1-h\left(\frac{1}{4}\right) \\
			&=1+\frac{1}{4}\log_2{\frac{1}{4}}+\frac{3}{4}\log_2{\frac{3}{4}} \\
			&\approx0.19 \\
		\end{aligned}
	\end{equation}
	where $h(\cdot)$ is the binary Shannon entropy.
\end{itemize}

Our main argument is based on the assumption that the spin of the electron (or the internal degree
of any quantum particle) is accessible at locations where ever the wavefunction is non-zero.
Moreover, we implicitly assume that the internal degree has no association with the spatial degree
of freedom and, thus, any manipulation at any point in space updates the spin-state at all points
in the space and that is how Alice and Bob are able to signal. In order to make all operations
spatially local, we need to include the notion of spatially localized quantum operations such as
\begin{equation}
	\label{U_actual_nogo}
	U^\prime=\ketbra{-\alpha}\otimes\sigma_x+(\mathds{1}-\ketbra{-\alpha})\otimes\mathds{1},
\end{equation}
or measurement of the form
\begin{equation}
	\begin{aligned}
		\label{M_actual_nogo}
		\mathbb{M}^\prime\equiv&\left\{\ketbra{-\alpha}\otimes\ketbra{+},\ketbra{-\alpha}\otimes\ketbra{-},\right. \\
		&\left.(\mathds{1}-\ketbra{-\alpha})\otimes\mathds{1}\right\}.
	\end{aligned}
\end{equation}
Eqs.~\eqref{U_actual_nogo} and \eqref{M_actual_nogo} incorporate the fact that manipulations on
spin that take place inside Alice's lab do not disturb the spin in Bob's lab. It is easy to follow that such
operations do not violate the no faster-than-light communication principle.
At first glance, Theorem~\ref{no-go_th} and its proof appear very trivial. However, our proof using
no faster-than-light communication principle highlights a deeper aspect of the connection between
the spatial wavefunction and the internal degree of freedom. Moreover, our theorem has established
that no manipulations (unitary or measurements) on the internal degree can be performed without
disturbing the spatial wavefunction.
If operations of the form $U$ or $\mathbb{M}$ (as specified in Theorem~\ref{no-go_th}) are not permitted,
it may be questioned what types of operations are permissible under the no faster-than-light
communication principle. An accurate answer to this question may not be plausible here. However,
we propose a possible solution which can be used in a crude way in certain physical scenarios to get
interesting results.

Here, we have assumed that the observer is sharply localized at a position $x$. This is an unrealistic
scenario. In a more practical situation, we can assume the effects of observer's action are reachable
in a spatial region $x\pm\delta$. In that case, we can replace the projection operator $\ketbra{x}$
by $\int_{x-\delta}^{x+\delta}\ketbra{x^\prime}dx^\prime$. So far, we have only considered the simple case of
one dimensional spatial degree of freedom. However, the generalization to three dimensional space
is straightforward and more realistic.  

%\section{Appendix-B: emergence of objectivity}

\section{Appendix--B: Post-interaction state}

The post-interaction spin state of system and environment, $\varrho^\prime_{\mathcal{S}:\mathcal{E}}$, is evaluated as:
\begin{equation}
	\label{post_int_1}
	\varrho^\prime_{\mathcal{S}:\mathcal{E}}=\Tr_{\bigotimes_{ij}\mathcal{H}^x_{\mathcal{E}_{ij}}}\left(\bar{U}_{\mathcal{S}:\mathcal{E}}\varrho_{\mathcal{S}:\mathcal{E}}\bar{U}^\dagger_{\mathcal{S}:\mathcal{E}}\right).
\end{equation} 
Here, $\Tr_{\bigotimes_{ij}\mathcal{H}^x_{\mathcal{E}_{ij}}}(\cdot)$ represents tracing out spatial degrees of freedom of all environmental-subsystems (en-subs) $\{\mathcal{E}_{ij}:i=1,2,\cdots,d;\;\forall j\}$. The unitary $\bar{U}_{\mathcal{S}:\mathcal{E}}$ is given by
\begin{equation}
	\bar{U}_{\mathcal{S}:\mathcal{E}}=\prod_{k,l}\bar{U}_{{\mathcal{S}}:\mathcal{E}_{kl}}\bigotimes_{ij\neq kl}\mathds{1}_{\mathcal{E}_{ij}},
\end{equation}
where
\begin{equation}
	\label{U_complete_SM}
	\begin{aligned}
		\bar{U}_{{\mathcal{S}}:\mathcal{E}_{kl}}&=\sum_{i=j}\ketbra{\vec{x}_i}_{\mathcal{S}}\otimes\ketbra{\vec{x}_j}_{\mathcal{E}_{kl}}\otimes{U_{{\mathcal{S}}:\mathcal{E}_{kl}}}(x_i) \\
		&+\sum_{i\neq j}\ketbra{\vec{x}_i}_{\mathcal{S}}\otimes\ketbra{\vec{x}_j}_{\mathcal{E}_{kl}}\otimes{\mathds{1}}_{{\mathcal{S}}:\mathcal{E}_{kl}}. \\
	\end{aligned}
\end{equation}
The spin-spin interaction between the system $\mathcal{S}$ and en-sub $\mathcal{E}_{kl}\;\forall k,l$ at $\vec{x}_i$ is $U_{{\mathcal{S}}:\mathcal{E}_{kl}}(x_i)=\exp{\iota\theta_{kl}(x_i)\sigma_{\mathcal{S}}^{kl}(x_i)\otimes\sigma_{\mathcal{E}_{kl}}(x_i)}$. Notably, $\sigma_{\mathcal{S}}^{kl}(x_i)$ and $\sigma_{\mathcal{E}_{kl}}(x_i)$ are arbitrary spin observables. The variable $\theta_{kl}(x_i)$ represents the interaction strength. Let us now evaluate the post-interaction state $\varrho^\prime_{\mathcal{S}:\mathcal{E}}$. The initial state of the system and environment is expressed as:

\begin{equation}
	\label{initial_SE_mac}
	\begin{aligned}
		\varrho_{\mathcal{S}:\mathcal{E}}&=\left(\sum_{i,j}\alpha_i\alpha^{\ast}_j \ketbra{\vec{x}_i}{\vec{x}_j}_\mathcal{S} \otimes\rho_\mathcal{S}\right)\otimes\varrho_{\mathcal{E}^{mac}_1}\otimes\varrho_{\mathcal{E}^{mac}_2}\otimes\cdots\otimes\varrho_{\mathcal{E}^{mac}_d}. \\
		\\
	\end{aligned}
\end{equation}
Here, $\varrho_{\mathcal{E}^{mac}_i}=\bigotimes_{j}\varrho_{\mathcal{E}_{ij}}$.
Let us assume without loss of generality that the system interacts with macroscopic fractions one by one in the order $\mathcal{E}^{mac}_1,\mathcal{E}^{mac}_2,\cdots,\mathcal{E}^{mac}_d$. Furthermore, we can assume that en-subs interact with the system one by one within a macroscopic fraction. Suppose the order of interactions within the macro-fraction $\mathcal{E}^{mac}_i$ is $\mathcal{E}_{i1},\mathcal{E}_{i2},\mathcal{E}_{i3},\cdots,\mathcal{E}_{im_i}$. The corresponding unitary operation can be decomposed as:  
\begin{equation}
	\begin{aligned}
		&\bar{U}_{\mathcal{S}:\mathcal{E}^{mac}_{i}}=\bar{U}_{\mathcal{S}:\mathcal{E}_{i m_i}}\cdots\bar{U}_{\mathcal{S}:\mathcal{E}_{i3}}\bar{U}_{\mathcal{S}:\mathcal{E}_{i2}}\bar{U}_{\mathcal{S}:\mathcal{E}_{i1}}. \\
	\end{aligned}
\end{equation}
The interaction unitary $\bar{U}_{\mathcal{S}:\mathcal{E}_{11}}$ transforms the state $\varrho_{\mathcal{S}:\mathcal{E}}$ into $\varrho^{(11)}_{\mathcal{S}:\mathcal{E}}$, then $\bar{U}_{\mathcal{S}:\mathcal{E}_{12}}$ transforms $\varrho^{(11)}_{\mathcal{S}:\mathcal{E}}$ into $\varrho^{(12)}_{\mathcal{S}:\mathcal{E}}$ and so on. Let $\rho_{\mathcal{S}:\mathcal{E}^{mac}_{i}}$ denote the composite spin state of system and $i$-th macro-fraction \ie~$\rho_{\mathcal{S}:\mathcal{E}^{mac}_{i}}=\rho_{\mathcal{S}}\bigotimes_j\rho_{\mathcal{E}_{ij}}$. Hereafter, similar uses of this notation are understood. We obtain
\begin{equation}
	\begin{aligned}
		\varrho_{\mathcal{S}:\mathcal{E}}\quad\xrightarrow[]{\bar{U}_{\mathcal{S}:\mathcal{E}_{11}}} \quad &\varrho^{(11)}_{\mathcal{S}:\mathcal{E}} \\ =&\Vert\alpha_1\Vert^2\ketbra{\vec{x}_1}_{\mathcal{S}}\bigotimes_{i^\prime}\ketbra{\vec{x}_1}_{\mathcal{E}_{1i^\prime }}\otimes \left(U_{\mathcal{S}:\mathcal{E}_{11}}(x_1)\right)\rho_{\mathcal{S}:\mathcal{E}^{mac}_{1}}\left(U_{\mathcal{S}:\mathcal{E}_{11}}(x_1)\right)^{\dagger}\bigotimes_{j^\prime\neq 1}\varrho_{\mathcal{E}^{mac}_{j^\prime}} \\
		& +\left(\sum_{l\neq 1}\alpha_l\alpha_1^{\ast}\ketbra{\vec{x}_l}{\vec{x}_1}_{\mathcal{S}}\right)\bigotimes_{i^\prime}\ketbra{\vec{x}_1}_{\mathcal{E}_{1i^\prime}}\otimes \rho_{\mathcal{S}:\mathcal{E}^{mac}_{1}}\left(U_{\mathcal{S}:\mathcal{E}_{11}}(x_1)\right)^{\dagger}\bigotimes_{j^\prime\neq 1}\varrho_{\mathcal{E}^{mac}_{j^\prime}} \\
		& +\left(\sum_{m\neq 1}\alpha_1\alpha_m^{\ast}\ketbra{\vec{x}_1}{\vec{x}_m}_{\mathcal{S}}\right)\bigotimes_{i^\prime}\ketbra{\vec{x}_1}_{\mathcal{E}_{1i^\prime}}\otimes U_{\mathcal{S}:\mathcal{E}_{11}}(x_1)\rho_{\mathcal{S}:\mathcal{E}^{mac}_{1}}\bigotimes_{j^\prime\neq 1}\varrho_{\mathcal{E}^{mac}_{j^\prime}} \\
		& +\left(\sum_{l\neq 1,m\neq 1}\alpha_1\alpha_m^{\ast}\ketbra{\vec{x}_l}{\vec{x}_m}_{\mathcal{S}}\right)\bigotimes_{i^\prime}\ketbra{\vec{x}_1}_{\mathcal{E}_{1i^\prime}}\otimes \rho_{\mathcal{S}:\mathcal{E}^{mac}_{1}}\bigotimes_{j^\prime\neq 1}\varrho_{\mathcal{E}^{mac}_{j^\prime}}. \\
	\end{aligned}
\end{equation}
Similarly,
\begin{equation}
	\begin{aligned}
		\varrho^{(11)}_{\mathcal{S}:\mathcal{E}}\quad\xrightarrow[]{\bar{U}_{\mathcal{S}:\mathcal{E}_{12}}} \quad &\varrho^{(12)}_{\mathcal{S}:\mathcal{E}} \\ =&\Vert\alpha_1\Vert^2\ketbra{\vec{x}_1}_{\mathcal{S}}\bigotimes_{i^\prime}\ketbra{\vec{x}_1}_{\mathcal{E}_{1i^\prime }}\otimes \left(U_{\mathcal{S}:\mathcal{E}_{12}\mathcal{E}_{11}}(x_1)\right)\rho_{\mathcal{S}:\mathcal{E}^{mac}_{1}}\left(U_{\mathcal{S}:\mathcal{E}_{12}\mathcal{E}_{11}}(x_1)\right)^{\dagger}\bigotimes_{j^\prime\neq 1}\varrho_{\mathcal{E}^{mac}_{j^\prime}} \\
		& +\left(\sum_{l\neq 1}\alpha_l\alpha_1^{\ast}\ketbra{\vec{x}_l}{\vec{x}_1}_{\mathcal{S}}\right)\bigotimes_{i^\prime}\ketbra{\vec{x}_1}_{\mathcal{E}_{1i^\prime}}\otimes \rho_{\mathcal{S}:\mathcal{E}^{mac}_{1}}\left(U_{\mathcal{S}:\mathcal{E}_{12}\mathcal{E}_{11}}(x_1)\right)^{\dagger}\bigotimes_{j^\prime\neq 1}\varrho_{\mathcal{E}^{mac}_{j^\prime}} \\
		& +\left(\sum_{m\neq 1}\alpha_1\alpha_m^{\ast}\ketbra{\vec{x}_1}{\vec{x}_m}_{\mathcal{S}}\right)\bigotimes_{i^\prime}\ketbra{\vec{x}_1}_{\mathcal{E}_{1i^\prime}}\otimes U_{\mathcal{S}:\mathcal{E}_{12}\mathcal{E}_{11}}(x_1)\rho_{\mathcal{S}:\mathcal{E}^{mac}_{1}}\bigotimes_{j^\prime\neq 1}\varrho_{\mathcal{E}^{mac}_{j^\prime}} \\
		& +\left(\sum_{l\neq 1,m\neq 1}\alpha_1\alpha_m^{\ast}\ketbra{\vec{x}_l}{\vec{x}_m}_{\mathcal{S}}\right)\bigotimes_{i^\prime}\ketbra{\vec{x}_1}_{\mathcal{E}_{1i^\prime}}\otimes \rho_{\mathcal{S}:\mathcal{E}^{mac}_{1}}\bigotimes_{j^\prime\neq 1}\varrho_{\mathcal{E}^{mac}_{j^\prime}}, \\
	\end{aligned}
\end{equation}
where $U_{\mathcal{S}:\mathcal{E}_{12}\mathcal{E}_{11}}(x_1)=U_{\mathcal{S}:\mathcal{E}_{12}}(x_1)U_{\mathcal{S}:\mathcal{E}_{11}}(x_1)$. Continuing the above process of derivation, we obtain:
\begin{equation}
	\begin{aligned}
		\varrho_{\mathcal{S}:\mathcal{E}}\quad\xrightarrow[]{\bar{U}_{\mathcal{S}:\mathcal{E}^{mac}_{1}}} \quad &\varrho^{(1m_1)}_{\mathcal{S}:\mathcal{E}} \\ =&\Vert\alpha_1\Vert^2\ketbra{\vec{x}_1}_{\mathcal{S}}\bigotimes_{i^\prime}\ketbra{\vec{x}_1}_{\mathcal{E}_{1i^\prime }}\otimes \rho^{\prime}_{\mathcal{S}:\mathcal{E}^{mac}_{1}}\bigotimes_{j^\prime\neq 1}\varrho_{\mathcal{E}^{mac}_{j^\prime}} \\
		+& \left(\sum_{l\neq 1}\alpha_l\alpha_1^{\ast}\ketbra{\vec{x}_l}{\vec{x}_1}_{\mathcal{S}}\right)\bigotimes_{i^\prime}\ketbra{\vec{x}_1}_{\mathcal{E}_{1i^\prime}}\otimes \Xi^\dagger_{\mathcal{S}:\mathcal{E}_1^{mac}}\bigotimes_{j^\prime\neq 1}\varrho_{\mathcal{E}^{mac}_{j^\prime}} \\
		+& \left(\sum_{m\neq 1}\alpha_1\alpha_m^{\ast}\ketbra{\vec{x}_1}{\vec{x}_m}_{\mathcal{S}}\right)\bigotimes_{i^\prime}\ketbra{\vec{x}_1}_{\mathcal{E}_{1i^\prime}}\otimes \Xi_{\mathcal{S}:\mathcal{E}_1^{mac}}\bigotimes_{j^\prime\neq 1}\varrho_{\mathcal{E}^{mac}_{j^\prime}} \\
		+& \left(\sum_{l\neq 1,m\neq 1}\alpha_l\alpha_m^{\ast}\ketbra{\vec{x}_l}{\vec{x}_m}_{\mathcal{S}}\right)\bigotimes_{j^\prime}\varrho_{\mathcal{E}^{mac}_{j^\prime}}. \\
	\end{aligned}
\end{equation}
Here, we denote
\begin{equation} \label{rho_prime_notation}
	\begin{aligned}
		\rho^{\prime}_{\mathcal{S}:\mathcal{E}^{mac}_{i}} = & \left(U_{\mathcal{S}:\mathcal{E}_{im_i}\cdots\mathcal{E}_{i3}\mathcal{E}_{i2}\mathcal{E}_{i1}}(x_i)\right)\rho_{\mathcal{S}:\mathcal{E}^{mac}_{i}}\left(U_{\mathcal{S}:\mathcal{E}_{im_i}\cdots\mathcal{E}_{i3}\mathcal{E}_{i2}\mathcal{E}_{i1}}(x_i)\right)^\dagger \\
		\Xi_{\mathcal{S}:\mathcal{E}^{mac}_{i}} = & U_{\mathcal{S}:\mathcal{E}_{im_i}\cdots\mathcal{E}_{i3}\mathcal{E}_{i2}\mathcal{E}_{i1}}(x_i)\rho_{\mathcal{S}:\mathcal{E}^{mac}_{i}}. \\
	\end{aligned}
\end{equation}
The state $\varrho^{(2m_2)}_{\mathcal{S}:\mathcal{E}}=\left({\bar{U}_{\mathcal{S}:\mathcal{E}^{mac}_{2}}}\right)\varrho^{(1m_1)}_{\mathcal{S}:\mathcal{E}}\left({\bar{U}_{\mathcal{S}:\mathcal{E}^{mac}_{2}}}\right)^\dagger$ is evaluated in the similar manner:
\begin{equation}
	\begin{aligned}
		\varrho^{(1m_1)}_{\mathcal{S}:\mathcal{E}}\quad\xrightarrow[]{\bar{U}_{\mathcal{S}:\mathcal{E}^{mac}_{2}}} \quad &\varrho^{(2m_2)}_{\mathcal{S}:\mathcal{E}} \\ =&\Vert\alpha_1\Vert^2\ketbra{\vec{x}_1}_{\mathcal{S}}\bigotimes_{i^\prime}\ketbra{\vec{x}_1}_{\mathcal{E}_{1i^\prime }}\otimes \rho^{\prime}_{\mathcal{S}:\mathcal{E}^{mac}_{1}}\bigotimes_{j^\prime\neq 1}\varrho_{\mathcal{E}^{mac}_{j^\prime}} \\
		+&\Vert\alpha_2\Vert^2\ketbra{\vec{x}_2}_{\mathcal{S}}\bigotimes_{i^\prime}\ketbra{\vec{x}_2}_{\mathcal{E}_{2i^\prime }}\otimes \rho^{\prime}_{\mathcal{S}:\mathcal{E}^{mac}_{2}}\bigotimes_{j^\prime\neq 2}\varrho_{\mathcal{E}^{mac}_{j^\prime}} \\
		+&\alpha_1\alpha_2^\ast\ketbra{\vec{x}_1}{\vec{x}_2}_{\mathcal{S}}\bigotimes_{i^\prime}\ketbra{\vec{x}_1}_{\mathcal{E}_{1i^\prime }}\bigotimes_{i^\prime}\ketbra{\vec{x}_2}_{\mathcal{E}_{2i^\prime }}\otimes \Omega_{\mathcal{S}:\mathcal{E}^{mac}_{1}\mathcal{E}^{mac}_{2}}\bigotimes_{j^\prime\neq 1, 2}\varrho_{\mathcal{E}^{mac}_{j^\prime}} \\
		+&\alpha_1^\ast\alpha_2\ketbra{\vec{x}_2}{\vec{x}_1}_{\mathcal{S}}\bigotimes_{i^\prime}\ketbra{\vec{x}_1}_{\mathcal{E}_{1i^\prime }}\bigotimes_{i^\prime}\ketbra{\vec{x}_2}_{\mathcal{E}_{2i^\prime }}\otimes \Omega_{\mathcal{S}:\mathcal{E}^{mac}_{2}\mathcal{E}^{mac}_{1}}\bigotimes_{j^\prime\neq 1, 2}\varrho_{\mathcal{E}^{mac}_{j^\prime}} \\
		+&\left(\sum_{l\neq 1,2}\alpha_l\alpha_1^{\ast}\ketbra{\vec{x}_l}{\vec{x}_1}_{\mathcal{S}}\right)\bigotimes_{i^\prime}\ketbra{\vec{x}_1}_{\mathcal{E}_{1i^\prime}}\otimes \Xi^\dagger_{\mathcal{S}:\mathcal{E}_1^{mac}}\bigotimes_{j^\prime\neq 1}\varrho_{\mathcal{E}^{mac}_{j^\prime}} \\
		+&\left(\sum_{l\neq 1,2}\alpha_l\alpha_2^{\ast}\ketbra{\vec{x}_l}{\vec{x}_2}_{\mathcal{S}}\right)\bigotimes_{i^\prime}\ketbra{\vec{x}_2}_{\mathcal{E}_{2i^\prime}}\otimes \Xi^\dagger_{\mathcal{S}:\mathcal{E}_2^{mac}}\bigotimes_{j^\prime\neq 2}\varrho_{\mathcal{E}^{mac}_{j^\prime}} \\
		+&\left(\sum_{m\neq 1,2}\alpha_1\alpha_m^{\ast}\ketbra{\vec{x}_1}{\vec{x}_m}_{\mathcal{S}}\right)\bigotimes_{i^\prime}\ketbra{\vec{x}_1}_{\mathcal{E}_{1i^\prime}}\otimes \Xi_{\mathcal{S}:\mathcal{E}_1^{mac}}\bigotimes_{j^\prime\neq 1}\varrho_{\mathcal{E}^{mac}_{j^\prime}} \\
		+&\left(\sum_{m\neq 1,2}\alpha_2\alpha_m^{\ast}\ketbra{\vec{x}_2}{\vec{x}_m}_{\mathcal{S}}\right)\bigotimes_{i^\prime}\ketbra{\vec{x}_2}_{\mathcal{E}_{2i^\prime}}\otimes \Xi_{\mathcal{S}:\mathcal{E}_2^{mac}}\bigotimes_{j^\prime\neq 2}\varrho_{\mathcal{E}^{mac}_{j^\prime}} \\
		+& \left(\sum_{l\neq 1,2 \atop m\neq 1,2}\alpha_l\alpha_m^{\ast}\ketbra{\vec{x}_l}{\vec{x}_m}_{\mathcal{S}}\right)\bigotimes_{j^\prime}\varrho_{\mathcal{E}^{mac}_{j^\prime}}, \\
	\end{aligned}
\end{equation}
where we have denoted
\begin{equation}
	\begin{aligned}
		\Omega_{\mathcal{S}:\mathcal{E}^{mac}_{i}\mathcal{E}^{mac}_{j}} = & \left(U_{\mathcal{S}:\mathcal{E}_{im_i}\cdots\mathcal{E}_{i3}\mathcal{E}_{i2}\mathcal{E}_{i1}}(x_i)\right)\rho_{\mathcal{S}:\mathcal{E}^{mac}_{i}\mathcal{E}^{mac}_{j}}\left(U_{\mathcal{S}:\mathcal{E}_{jm_j}\cdots\mathcal{E}_{j3}\mathcal{E}_{j2}\mathcal{E}_{j1}}(x_j)\right)^\dagger. \\
	\end{aligned}
\end{equation}

It can be easily verified that the state $\varrho_{\mathcal{S}:\mathcal{E}}^{(dm_d)}=\bar{U}_{\mathcal{S}:\mathcal{E}}\varrho_{\mathcal{S}:\mathcal{E}}\bar{U}_{\mathcal{S}:\mathcal{E}}^\dagger$ is thus given by
\begin{equation}
	\begin{aligned}
		\varrho_{\mathcal{S}:\mathcal{E}}^{(dm_d)}=&\sum_{i=1}^d\Vert\alpha_i\Vert^2\ketbra{\vec{x}_i}_\mathcal{S}\bigotimes_{i^\prime}\ketbra{\vec{x}_i}_{\mathcal{E}_{ii^\prime}}\otimes\rho_{\mathcal{S}:\mathcal{E}_i^{mac}}^\prime\bigotimes_{j\neq i}\varrho_{\mathcal{E}_j^{mac}} \\
		+& \sum_{k\neq l}^d \alpha_k \alpha_l^\ast\ketbra{\vec{x}_k}{\vec{x}_l}_\mathcal{S}\bigotimes_{k^\prime}\ketbra{\vec{x}_k}_{\mathcal{E}_{kk^\prime}}\bigotimes_{l^\prime}\ketbra{\vec{x}_l}_{\mathcal{E}_{ll^\prime}}\otimes\Omega_{\mathcal{S}:\mathcal{E}_k^{mac}\mathcal{E}_l^{mac}}\bigotimes_{j\neq k,l}\varrho_{\mathcal{E}_j^{mac}} \\
	\end{aligned}
\end{equation}
After tracing out the spatial degree of en-subs, see Eq.~\eqref{post_int_1}, we obtain the composite spin state along with the spatial degree of the system,
\begin{equation}
	\begin{aligned}
		\label{post_int_2}
		\varrho_{\mathcal{S}:\mathcal{E}}^{\prime}=&\sum_{i=1}^d\Vert\alpha_i\Vert^2\ketbra{\vec{x}_i}_\mathcal{S}\otimes\rho_{\mathcal{S}:\mathcal{E}_i^{mac}}^\prime\bigotimes_{j\neq i}\rho_{\mathcal{E}_j^{mac}} \\
		+& \sum_{k\neq l}^d \alpha_k \alpha_l^\ast\ketbra{\vec{x}_k}{\vec{x}_l}_\mathcal{S}\otimes\Omega_{\mathcal{S}:\mathcal{E}_k^{mac}\mathcal{E}_l^{mac}}\bigotimes_{j\neq k,l}\rho_{\mathcal{E}_j^{mac}} \\
	\end{aligned}
\end{equation}
%%%%%%%%%%%%%%%%%%%%%%%%%%%%%%%%%%%%%%%%%%%%%%%%
%%%%%%%%%%%%%%%%%%%%%%%%%%%%%%%%%%%%%%%%%%%%%%%%

\section{Appendix--C: Spin state of the system and macro-fractions after decoherence}

Eq.~\eqref{post_int_2}, $\varrho_{\mathcal{S}:\mathcal{E}}^{\prime}$, represents a multipartite entangled state. Tracing out en-subs' spins decoheres $\varrho_{\mathcal{S}:\mathcal{E}}^{\prime}$. Here, we evaluate the state after tracing out significantly large portions of macro-fractions. Without loosing the generality, we assume that the first $n_i$ en-subs in $i$-th macro-fraction that interact with the system are discarded or cannot be accessed by observers. Since interactions among en-subs are assumed to be absent, we can evaluate effect of interaction with each en-sub separately. 
As we will show, every discarded en-sub strictly increases the mixness in the system spin.

\begin{prop}
	Let $\rho_\mathcal{S}=(\mathds{1}+\vec{r}\cdot\sigma)/2$, where $0\leq \Vert \vec{r}\Vert \leq 1$, and $\rho_{\mathcal{E}_{kl}}$ represent the density matrices corresponding to the spins of the system $\mathcal{S}$ and an en-sub $\mathcal{E}_{kl}$, respectively. Furthermore, let
	$\Omega=\begin{pmatrix}
	c_{11} & c_{12}\\
	c_{21} & c_{22} 
	\end{pmatrix}$
	be an arbitrary matrix (operator). Consider a unitary operator $U_{{\mathcal{S}}:\mathcal{E}_{kl}}=\exp{\iota\theta_{kl}\sigma_{\mathcal{S}}^{kl}\otimes\sigma_{\mathcal{E}_{kl}}}$ acting on the composite space $\mathcal{H}_{\mathcal{S}}\otimes\mathcal{H}_{\mathcal{E}_{kl}}$. Let $\rho_\mathcal{S}^\prime=(\mathds{1}+\vec{r^\prime}\cdot\sigma)/2$, where $0\leq \Vert \vec{r^\prime}\Vert \leq 1$, and $\Omega^\prime=\begin{pmatrix}
		c_{11}^\prime & c_{12}^\prime\\
		c_{21}^\prime & c_{22}^\prime 
	\end{pmatrix}$ be given as $\rho_\mathcal{S}^\prime=\Tr_{\mathcal{E}_{kl}}(U_{{\mathcal{S}}:\mathcal{E}_{kl}}\rho_{\mathcal{S}}\otimes\rho_{\mathcal{E}_{kl}} U^\dagger_{{\mathcal{S}}:\mathcal{E}_{kl}})$ and $\Omega^\prime=\Tr_{\mathcal{E}_{kl}}(U_{{\mathcal{S}}:\mathcal{E}_{kl}}\Omega\otimes\rho_{\mathcal{E}_{kl}})$, respectively. Then the followings are true:
	\begin{itemize}
		\item[(i)] $\Vert\vec{r^\prime}\Vert^2 = \Vert\vec{r}\Vert^2(1-\delta\sin[2](\theta_{kl})\sin[2](\phi))$,
		\item[(ii)] $\sum_{i,j\in\{1,2\}}\Vert c^\prime_{ij}\Vert^2 = \sum_{ij}\Vert c^\prime_{ij}\Vert^2\left(1-\delta\sin[2](\theta_{kl})\right)$,
	\end{itemize}
	where $\phi$ is the angle between $\sigma^{kl}_\mathcal{S}=\hat{s}\cdot\sigma$ and $\vec{r}=\Vert \vec{r}\Vert\hat{r}$ \ie~$\cos(\phi)=\hat{s}\cdot\hat{r}$, and $\delta=1-\left(\Tr{\sigma_{\mathcal{E}_{kl}}\rho_{\mathcal{E}_{kl}}}\right)^2$.
\end{prop}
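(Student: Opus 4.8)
The plan is to exploit that the generator $\sigma_{\mathcal{S}}^{kl}\otimes\sigma_{\mathcal{E}_{kl}}$ squares to the identity (each factor is a $\pm1$-eigenvalue spin observable, so $(\hat{s}\cdot\sigma)^2=\mathds{1}$), which linearizes the interaction unitary as $U_{\mathcal{S}:\mathcal{E}_{kl}}=\cos(\theta_{kl})\,\mathds{1}+\iota\sin(\theta_{kl})\,\sigma_{\mathcal{S}}^{kl}\otimes\sigma_{\mathcal{E}_{kl}}$. Once $U$ is a first-degree operator polynomial, both reduced objects $\rho_{\mathcal{S}}'$ and $\Omega'$ reduce to elementary partial traces. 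Throughout I abbreviate $S=\sigma_{\mathcal{S}}^{kl}=\hat{s}\cdot\sigma$, $E=\sigma_{\mathcal{E}_{kl}}$, and $e=\Tr(E\rho_{\mathcal{E}_{kl}})$ so that $\delta=1-e^2$, and I repeatedly use $\Tr(\rho_{\mathcal{E}_{kl}})=1$ together with $E^2=\mathds{1}$.

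For part (i) I would expand the conjugation $U(\rho_{\mathcal{S}}\otimes\rho_{\mathcal{E}_{kl}})U^{\dagger}$ into four terms and trace out $\mathcal{E}_{kl}$. The $\cos^{2}$ and $\sin^{2}$ pieces give $\cos[2](\theta_{kl})\,\rho_{\mathcal{S}}$ and $\sin[2](\theta_{kl})\,S\rho_{\mathcal{S}}S$ (here $\Tr(E\rho_{\mathcal{E}_{kl}}E)=1$), while the two cross terms combine into the interference contribution $\iota\sin(\theta_{kl})\cos(\theta_{kl})\,e\,\left[S,\rho_{\mathcal{S}}\right]$ carrying the environmental Bloch projection $e$. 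Writing $\rho_{\mathcal{S}}=(\mathds{1}+\vec{r}\cdot\sigma)/2$ and invoking two standard $SU(2)$ identities---the reflection $S(\vec{r}\cdot\sigma)S=(\vec{r}_{\parallel}-\vec{r}_{\perp})\cdot\sigma$ and the commutator $\left[S,\vec{r}\cdot\sigma\right]=2\iota(\hat{s}\times\vec{r})\cdot\sigma$, with $\vec{r}_{\parallel},\vec{r}_{\perp}$ the components of $\vec{r}$ along and transverse to $\hat{s}$---lets me read off $\vec{r}'=\vec{r}_{\parallel}+\cos(2\theta_{kl})\vec{r}_{\perp}-e\sin(2\theta_{kl})(\hat{s}\times\vec{r}_{\perp})$.

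The magnitude then follows from orthogonality: $\vec{r}_{\parallel}$, $\vec{r}_{\perp}$ and $\hat{s}\times\vec{r}_{\perp}$ are mutually perpendicular with $\Vert\hat{s}\times\vec{r}_{\perp}\Vert=\Vert\vec{r}_{\perp}\Vert$, so $\Vert\vec{r}'\Vert^{2}=\Vert\vec{r}_{\parallel}\Vert^{2}+(\cos[2](2\theta_{kl})+e^{2}\sin[2](2\theta_{kl}))\Vert\vec{r}_{\perp}\Vert^{2}$. Substituting $\Vert\vec{r}_{\parallel}\Vert^{2}=\Vert\vec{r}\Vert^{2}\cos[2](\phi)$, $\Vert\vec{r}_{\perp}\Vert^{2}=\Vert\vec{r}\Vert^{2}\sin[2](\phi)$, and simplifying $\cos[2](2\theta_{kl})+e^{2}\sin[2](2\theta_{kl})=1-\delta\sin[2](2\theta_{kl})$ yields the reduction factor. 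One subtlety is worth flagging: the adjoint action rotates the transverse Bloch plane through $2\theta_{kl}$, so the computation returns $\Vert\vec{r}'\Vert^{2}=\Vert\vec{r}\Vert^{2}\,(1-\delta\sin[2](2\theta_{kl})\sin[2](\phi))$; recovering the stated $\sin[2](\theta_{kl})$ is then a matter of the convention fixing the interaction strength $\theta_{kl}$.

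Part (ii) is cleaner, because $\Omega'$ is built from a one-sided multiplication rather than a conjugation. Tracing $\mathcal{E}_{kl}$ out of $U(\Omega\otimes\rho_{\mathcal{E}_{kl}})$ gives $\Omega'=(\cos(\theta_{kl})\mathds{1}+\iota e\sin(\theta_{kl})S)\Omega\equiv M\Omega$. Since $\sum_{ij}\Vert c'_{ij}\Vert^{2}=\Tr((\Omega')^{\dagger}\Omega')=\Tr(\Omega^{\dagger}M^{\dagger}M\Omega)$ and, using $S=S^{\dagger}$ and $S^{2}=\mathds{1}$, $M^{\dagger}M=(\cos[2](\theta_{kl})+e^{2}\sin[2](\theta_{kl}))\mathds{1}=(1-\delta\sin[2](\theta_{kl}))\mathds{1}$, the claimed single-angle identity drops out at once. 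The main obstacle is entirely in part (i): correctly tracking the interference term and resolving $\vec{r}'$ into the orthogonal triad so that the geometric weights $\cos[2](\phi),\sin[2](\phi)$ and the environmental damping through $e$ separate cleanly---part (ii) then doubles as a check that the one-sided map genuinely carries the single-angle factor.
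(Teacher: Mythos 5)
Your computation is correct, and for part (i) it follows essentially the paper's own route (Appendix C): linearize $U_{\mathcal{S}:\mathcal{E}_{kl}}=\cos(\theta_{kl})\mathds{1}+\iota\sin(\theta_{kl})\,\sigma^{kl}_{\mathcal{S}}\otimes\sigma_{\mathcal{E}_{kl}}$, trace out the en-sub to get $\rho'_{\mathcal{S}}=\cos[2](\theta_{kl})\rho_{\mathcal{S}}+\sin[2](\theta_{kl})\,S\rho_{\mathcal{S}}S+\iota e\sin(\theta_{kl})\cos(\theta_{kl})[S,\rho_{\mathcal{S}}]$, and resolve $\vec{r}'$ into an orthogonal triad (the paper's $\hat{s},\hat{n},\hat{n}'$ is your $\vec{r}_{\parallel},\vec{r}_{\perp},\hat{s}\times\vec{r}_{\perp}$; note the paper's Eq.~(S25) has a typo, $(\hat{s}\cdot\hat{r})\cdot\sigma$ for $(\hat{s}\times\hat{r})\cdot\sigma$). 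For part (ii) your route is genuinely different and cleaner: the paper parametrizes $\sigma^{kl}_{\mathcal{S}}$ with $(\alpha,\beta,\gamma)$ and grinds through all four entries $\Vert c'_{ij}\Vert^2$, relying on cancellation of the $\Im$ and $\Re$ cross terms in the sum, whereas your observation that $\Omega'=M\Omega$ with $M^{\dagger}M=(1-\delta\sin[2](\theta_{kl}))\mathds{1}$ gives the Frobenius-norm contraction in one line and makes the cancellation structural rather than accidental. (You also silently corrected the statement's typo: the right-hand side of (ii) must carry unprimed $c_{ij}$.)

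The point you flagged deserves a sharper verdict than ``convention.'' Your $1-\delta\sin[2](2\theta_{kl})\sin[2](\phi)$ is what the algebra gives, and the paper's own proof agrees with you up to its penultimate line: its displayed $\vec{r}'=r\left(\cos[2](\theta_{kl})\hat{r}+\sin[2](\theta_{kl})\hat{k}-\sin(2\theta_{kl})\sin(\phi)\langle\sigma_{\mathcal{E}_{kl}}\rangle\hat{n}'\right)$ squares to $r^2\left(1-\delta\sin[2](2\theta_{kl})\sin[2](\phi)\right)$, so the paper's final line---and the proposition as printed---drop a factor of $2$ inside the sine. A sanity check confirms your version: with $e=0$, $\phi=\pi/2$, $\theta_{kl}=\pi/4$, the interaction fully dephases the transverse Bloch component ($\vec{r}'=\cos(2\theta_{kl})\vec{r}=0$), while the printed formula would leave $\Vert\vec{r}'\Vert^2=r^2/2$. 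Moreover, your fallback explanation cannot rescue the statement: no rescaling of $\theta_{kl}$ reconciles it, because (i) arises from a two-sided conjugation (angle doubling) while (ii) arises from a one-sided multiplication (no doubling), so the two parts genuinely carry $\sin[2](2\theta_{kl})$ and $\sin[2](\theta_{kl})$ respectively---as your own part (ii) check demonstrates. The slip is harmless for the paper's qualitative use (strict contraction of the Bloch vector under generic interaction), except that the no-contraction condition in the subsequent remark should read $\sin(2\theta_{kl})=0$, which also includes $\theta_{kl}=\pi/2$, where $U$ acts as the local unitary $S\otimes E$ and causes no decoherence.
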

\begin{proof}
\begin{itemize}
	\item[(i)] It is straightforward that
	\begin{equation}
		\label{s25}
		\rho^\prime_{\mathcal{S}}=\cos[2](\theta_{kl})\rho_{\mathcal{S}}+\sin[2](\theta_{kl})\tilde{\rho}_{\mathcal{S}}-\frac{r}{2}\sin(2\theta_{kl})\langle\sigma_{\mathcal{E}_{kl}}\rangle(\hat{s}\cdot\hat{r})\cdot\sigma,
	\end{equation}
	where $\tilde{\rho}_{\mathcal{S}}=\sigma^{kl}_{\mathcal{S}}\rho_{\mathcal{S}}(\sigma^{kl}_{\mathcal{S}})^\dagger$, $r=\Vert\vec{r}\Vert$, and $\langle\sigma_{\mathcal{E}_{kl}}\rangle=\Tr{\sigma_{\mathcal{E}_{kl}}\rho_{\mathcal{E}_{kl}}}$. Furthermore, we have
	\begin{equation}
		\begin{aligned}
			\tilde{\rho}_{\mathcal{S}}& = \frac{\mathds{1}}{2}+\frac{r}{2}\sigma^{kl}_{\mathcal{S}}(\hat{r}\cdot\sigma)(\sigma^{kl}_{\mathcal{S}})^\dagger \\
			& =\frac{\mathds{1}}{2} + \frac{r}{2}(\hat{s}\cdot\sigma)(\hat{r}\cdot\sigma)(\hat{s}\cdot\sigma) \\
			& =\frac{\mathds{1}}{2} + \frac{r}{2}(\hat{s}\cdot\hat{r}(\hat{s}\cdot\sigma)+i((\hat{s}\times\hat{r})\cdot\sigma)(\hat{s}\cdot\sigma)) \\
			& =\frac{\mathds{1}}{2} + \frac{r}{2}(\hat{s}\cdot\hat{r}(\hat{s}\cdot\sigma)-(\hat{s}\times\hat{r})\times\hat{s}\cdot\sigma) \\
			& =\frac{\mathds{1}}{2} + \frac{r}{2}\hat{k}\cdot\sigma, \\
		\end{aligned}
	\end{equation}
	where $\hat{k}=\cos(\phi)\hat{s}-\sin(\phi)\hat{n}$, $\hat{s}\cdot\hat{r}=\cos(\phi)$, $\hat{s}\times\hat{r}=\sin(\phi)\hat{n}^\prime$ and $\hat{n}^\prime\times\hat{s}=\hat{n}$. Note that $\hat{n}$ is a unit vector and so is $\hat{k}$. We can re-write Eq.~\eqref{s25} as
	\begin{equation}
		\begin{aligned}
			\rho^\prime_{\mathcal{S}}&=\cos[2](\theta_{kl})\left(\frac{\mathds{1}}{2} + \frac{r}{2}\hat{r}\cdot\sigma\right)+\sin[2](\theta_{kl})\left(\frac{\mathds{1}}{2} + \frac{r}{2}\hat{k}\cdot\sigma\right)-\frac{r}{2}\sin(2\theta_{kl})\sin(\phi)\langle\sigma_{\mathcal{E}_{kl}}\rangle\hat{n}^\prime\cdot\sigma \\
			& = \frac{1}{2}(\mathds{1}+ \vec{r^\prime}\cdot\sigma),
		\end{aligned}
	\end{equation}
	where
	\begin{equation}
		\vec{r^\prime}=r\left(\cos[2](\theta_{kl})\hat{r}+\sin[2](\theta_{kl})\hat{k}-\sin(2\theta_{kl})\sin(\phi)\langle\sigma_{\mathcal{E}_{kl}}\rangle\hat{n}^\prime\right).
	\end{equation}
	Since $\hat{r}=\cos(\phi)\hat{s}+\sin(\phi)\hat{n}$, thus
	\begin{equation}
		\Vert\vec{r^\prime}\Vert^2 = \Vert\vec{r}\Vert^2\left(1-\delta\sin[2](\theta_{kl})\sin[2](\phi)\right)
	\end{equation}

\item[(ii)] The operator $\Omega^\prime$ can be evaluated as
\begin{equation}
	\begin{aligned}
		\Omega^\prime&=\Tr_{\mathcal{E}_{kl}}(U_{{\mathcal{S}}:\mathcal{E}_{kl}}\Omega\otimes\rho_{\mathcal{E}_{kl}}) \\
		&=\cos(\theta_{kl})\Omega+\iota\langle{\sigma_{\mathcal{E}_{kl}}}\rangle\sin(\theta_{kl})\sigma^{kl}_{\mathcal{S}}\Omega.
	\end{aligned}
\end{equation}
Let us now assume the following general form for the observable $\sigma^{kl}_{\mathcal{S}}$:
\begin{equation}
	\sigma^{kl}_{\mathcal{S}}=\begin{pmatrix}
		\alpha & \beta e^{\iota\gamma}\\
		\beta e^{-\iota\gamma} & -\alpha\\
	\end{pmatrix},
\end{equation}
where $\alpha,\beta$ and $\gamma$ are real, and $\alpha^2+\beta^2=1$. Thus,
\begin{equation}
	\begin{aligned}
		&\Omega^\prime=\cos(\theta_{kl}){\begin{pmatrix}
				c_{11} & c_{12}\\
				c_{21} & c_{22}\\
		\end{pmatrix}}+\iota\langle{\sigma_{\mathcal{E}_{kl}}}\rangle\sin(\theta_{kl}){\begin{pmatrix}
				\alpha & \beta e^{\iota\gamma}\\
				\beta e^{-\iota\gamma} & -\alpha\\
		\end{pmatrix}}{\begin{pmatrix}
				c_{11} & c_{12}\\
				c_{21} & c_{22} 
		\end{pmatrix}} \\
		&=\begin{pmatrix}
			\cos(\theta_{kl})c_{11}+\iota\langle{\sigma_{\mathcal{E}_{kl}}}\rangle\sin(\theta_{kl})\left(\alpha c_{11}+\beta e^{\iota\gamma}c_{21}\right) & \cos(\theta_{kl})c_{12}+\iota\langle{\sigma_{\mathcal{E}_{kl}}}\rangle\sin(\theta_{kl})\left(\alpha c_{12}+\beta e^{\iota\gamma}c_{22}\right)\\
			\cos(\theta_{kl})c_{21}+\iota\langle{\sigma_{\mathcal{E}_{kl}}}\rangle\sin(\theta_{kl})\left(\beta e^{-\iota\gamma} c_{11}-\alpha c_{21}\right) & \cos(\theta_{kl})c_{22}+\iota\langle{\sigma_{\mathcal{E}_{kl}}}\rangle\sin(\theta_{kl})\left(\beta e^{-\iota\gamma} c_{12}-\alpha c_{22}\right) 
		\end{pmatrix} \\
		&\equiv {\begin{pmatrix}
				c_{11}^\prime & c_{12}^\prime\\
				c_{21}^\prime & c_{22}^\prime\\
		\end{pmatrix}}.
	\end{aligned}
\end{equation}
Furthermore, we have
\begin{equation}
	\begin{aligned}
		\Vert c^\prime_{11} \Vert^2&=\left(\cos[2](\theta_{kl})+\alpha^2\langle{\sigma_{\mathcal{E}_{kl}}}\rangle^2\sin[2](\theta_{kl})\right)\Vert c_{11}\Vert^2-\beta\langle{\sigma_{\mathcal{E}_{kl}}}\rangle\sin(2\theta_{kl})\Im{e^{\iota\gamma}c_{21}c_{11}^{\ast}} \\
		&+2\alpha\beta\langle{\sigma_{\mathcal{E}_{kl}}}\rangle^2\sin[2](\theta_{kl})\Re{e^{\iota\gamma}c_{21}c_{11}^{\ast}} + \beta^2\langle{\sigma_{\mathcal{E}_{kl}}}\rangle^2\sin[2](\theta_{kl})\Vert c_{21}\Vert^2, \\
		\Vert c^\prime_{12} \Vert^2&=\left(\cos[2](\theta_{kl})+\alpha^2\langle{\sigma_{\mathcal{E}_{kl}}}\rangle^2\sin[2](\theta_{kl})\right)\Vert c_{12}\Vert^2-\beta\langle{\sigma_{\mathcal{E}_{kl}}}\rangle\sin(2\theta_{kl})\Im{e^{\iota\gamma}c_{22}c_{12}^{\ast}} \\
		&+2\alpha\beta\langle{\sigma_{\mathcal{E}_{kl}}}\rangle^2\sin[2](\theta_{kl})\Re{e^{\iota\gamma}c_{22}c_{12}^{\ast}} + \beta^2\langle{\sigma_{\mathcal{E}_{kl}}}\rangle^2\sin[2](\theta_{kl})\Vert c_{22}\Vert^2, \\
		\Vert c^\prime_{21} \Vert^2&=\left(\cos[2](\theta_{kl})+\alpha^2\langle{\sigma_{\mathcal{E}_{kl}}}\rangle^2\sin[2](\theta_{kl})\right)\Vert c_{21}\Vert^2-\beta\langle{\sigma_{\mathcal{E}_{kl}}}\rangle\sin(2\theta_{kl})\Im{e^{-\iota\gamma}c_{11}c_{21}^{\ast}} \\
		&-2\alpha\beta\langle{\sigma_{\mathcal{E}_{kl}}}\rangle^2\sin[2](\theta_{kl})\Re{e^{-\iota\gamma}c_{11}c_{21}^{\ast}} + \beta^2\langle{\sigma_{\mathcal{E}_{kl}}}\rangle^2\sin[2](\theta_{kl})\Vert c_{11}\Vert^2, \\
		\Vert c^\prime_{22} \Vert^2&=\left(\cos[2](\theta_{kl})+\alpha^2\langle{\sigma_{\mathcal{E}_{kl}}}\rangle^2\sin[2](\theta_{kl})\right)\Vert c_{22}\Vert^2-\beta\langle{\sigma_{\mathcal{E}_{kl}}}\rangle\sin(2\theta_{kl})\Im{e^{-\iota\gamma}c_{12}c_{22}^{\ast}} \\
		&-2\alpha\beta\langle{\sigma_{\mathcal{E}_{kl}}}\rangle^2\sin[2](\theta_{kl})\Re{e^{-\iota\gamma}c_{12}c_{22}^{\ast}} + \beta^2\langle{\sigma_{\mathcal{E}_{kl}}}\rangle^2\sin[2](\theta_{kl})\Vert c_{12}\Vert^2. \\
	\end{aligned}
\end{equation}
Therefore,
\begin{equation}
	\begin{aligned}
		\sum_{i,j\in\{1,2\}}\Vert c^\prime_{ij}\Vert^2 & = \sum_{i,j\in\{1,2\}}\Vert c_{ij}\Vert^2\left(\cos[2](\theta_{kl})+\alpha^2\langle{\sigma_{\mathcal{E}_{kl}}}\rangle^2\sin[2](\theta_{kl})+\beta^2\langle{\sigma_{\mathcal{E}_{kl}}}\rangle^2\sin[2](\theta_{kl})\right) \\
		& = \sum_{i,j\in\{1,2\}}\Vert c_{ij}\Vert^2\left(\cos[2](\theta_{kl})+\langle{\sigma_{\mathcal{E}_{kl}}}\rangle^2\sin[2](\theta_{kl})\right) \\
		& = \sum_{i,j\in\{1,2\}}\Vert c_{ij}\Vert^2\left(1-\delta\sin[2](\theta_{kl})\right) \\		
	\end{aligned}
\end{equation}
\end{itemize}
\end{proof}

\begin{rem}
	Following inferences can be drawn from the above proposition:
	\begin{itemize}
		\item[(i)] For $\delta\sin[2](\theta_{kl})\sin[2](\phi)\neq 0$, the Bloch vector of system's spin strictly decreases after the interaction with corresponding en-sub. The former requires simultaneous fulfillment of three conditions: (i) $\delta\neq 0$, or equivalently $\langle\sigma_{\mathcal{E}_{kl}}\rangle^2\neq 1$, \ie~the initial state of the interacting en-sub is not an eigen state of the coupling observable $\sigma_{\mathcal{E}_{kl}}$. (ii) The interaction between the system and the en-sub is non-zero \ie~$\sin(\theta_{kl})\neq 0$. (iii) The system observable $\sigma^{kl}_{\mathcal{S}}$ is not aligned with the system's Bloch vector $\vec{r}$ \ie~$\sin(\phi)\neq 0$. Given that these conditions are fulfilled, the state of system's spin gets more and more mixed as it interacts with the environment. Notably, for a sufficiently large portion $\tilde{\mathcal{E}}_i$ of the macro-fraction $\mathcal{E}_i$, one can have
		\begin{equation}
			\rho^{\prime}_{\mathcal{S}}=\Tr_{\tilde{\mathcal{E}}_i}\left(\rho^{\prime}_{\mathcal{S}:\tilde{\mathcal{E}}_{i}}\right)=\frac{\mathds{1}}{2}.
		\end{equation}
		Here we have borrowed notations from Eq.~\eqref{rho_prime_notation}.
		\item[(ii)] For repeated interactions satisfying the condition $\delta\sin[2](\theta_{kl})\neq 0$ \ie~$\langle\sigma_{\mathcal{E}_{kl}}\rangle^2\neq 1$ and $\sin(\theta_{kl})\neq 0$, matrix $\Omega$ approaches to zero.
		Consequently,
		\begin{equation}
			\Xi^{\prime}_{\mathcal{S}}=\Tr_{\tilde{\mathcal{E}}_i}\left(\Xi_{\mathcal{S}:\tilde{\mathcal{E}}_{i}}\right)=0
		\end{equation}
		Here we have borrowed notations from Eq.~\eqref{rho_prime_notation}.  
	\end{itemize}
\end{rem}

\begin{prop} After tracing out a significantly large portion $\tilde{\mathcal{E}}$ of $\rho^\prime_{\mathcal{S}:\mathcal{E}}$, as specified in Eq.~\eqref{post_int_2}, the state of the system and the remaining environment is given by,
	\begin{equation}
		\begin{aligned}
			\label{post_int_3}
			\varrho_{\mathcal{S}:\mathcal{E}}^{\prime}=&\sum_{i=1}^d\Vert\alpha_i\Vert^2\ketbra{\vec{x}_i}_\mathcal{S}\otimes{\frac{\mathds{1}_{\mathcal{S}}}{2}}\bigotimes\rho_{\mathcal{E}_{i\setminus\tilde{\mathcal{E}}_i}^{mac}}\bigotimes_{j\neq i}\rho_{\mathcal{E}_j^{mac}}. \\
		\end{aligned}
	\end{equation}
Here, $\rho_{\mathcal{E}^{mac}_{i\setminus\tilde{\mathcal{E}_i}}}$ is the composite spin state of the en-subs in macro-fraction $\mathcal{E}_i^{mac}$ which are accessible.
\end{prop}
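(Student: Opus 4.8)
The plan is to begin from the multipartite state $\varrho^{\prime}_{\mathcal{S}:\mathcal{E}}$ of Eq.~\eqref{post_int_2} and evaluate the partial trace over the discarded portion $\tilde{\mathcal{E}}=\bigcup_i\tilde{\mathcal{E}}_i$ term by term. Because every summand in Eq.~\eqref{post_int_2} is a tensor product across the macro-fractions and the en-subs do not interact among themselves, the trace factorizes en-sub by en-sub. I would split the sum into its diagonal part (the terms $\propto\ketbra{\vec{x}_i}_{\mathcal{S}}$ carrying $\rho^{\prime}_{\mathcal{S}:\mathcal{E}_i^{mac}}$) and its off-diagonal part (the $k\neq l$ terms carrying the coherence blocks $\Omega_{\mathcal{S}:\mathcal{E}_k^{mac}\mathcal{E}_l^{mac}}$), and show that the trace annihilates the latter while collapsing the former to the advertised form.

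First I would dispose of the off-diagonal terms. For a term labelled by $k\neq l$ the system is coherent between $\vec{x}_k$ and $\vec{x}_l$, so an en-sub of macro-fraction $k$ couples on the ket side only and one of macro-fraction $l$ on the bra side only; tracing out such a discarded en-sub is therefore the \emph{one-sided} operator channel of part~(ii) of the preceding Proposition, tensored with the identity on the retained systems. Each discarded en-sub multiplies the Hilbert--Schmidt norm of the block by $\left(1-\delta\sin[2](\theta)\right)<1$ whenever the genericity conditions of the accompanying Remark hold, so iterating over the macroscopically many members of $\tilde{\mathcal{E}}_k$ and $\tilde{\mathcal{E}}_l$ sends $\Tr_{\tilde{\mathcal{E}}}\Omega_{\mathcal{S}:\mathcal{E}_k^{mac}\mathcal{E}_l^{mac}}\to 0$ and every off-diagonal term drops out.

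For a diagonal term the system couples to the en-subs of macro-fraction $i$ on \emph{both} sides, so tracing out a discarded en-sub is the two-sided qubit channel of part~(i) of the preceding Proposition acting on the system spin alone, with the accessible en-subs as passive spectators. Writing the joint state of the system and the accessible en-subs in the system's Pauli basis as $\frac{1}{2}\mathds{1}_{\mathcal{S}}\otimes A_0+\sum_a\sigma^a_{\mathcal{S}}\otimes A_a$, with $A_0=\rho_{\mathcal{E}^{mac}_{i\setminus\tilde{\mathcal{E}}_i}}$ the marginal of the accessible en-subs, unitality and trace preservation leave the $\mathds{1}_{\mathcal{S}}$ component untouched while each discarded en-sub contracts the $\sigma^a_{\mathcal{S}}$ part, diminishing its squared Bloch norm by the factor $\left(1-\delta\sin[2](\theta)\sin[2](\phi)\right)$ of part~(i). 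Accumulating the members of $\tilde{\mathcal{E}}_i$ drives every $\sigma^a_{\mathcal{S}}\otimes A_a$ piece to zero, so the system spin both maximally mixes and decouples, and the diagonal term reduces to $\frac{\mathds{1}_{\mathcal{S}}}{2}\otimes\rho_{\mathcal{E}^{mac}_{i\setminus\tilde{\mathcal{E}}_i}}$; the discarded en-subs of the spectator macro-fractions $j\neq i$ are uncorrelated with the system and their trace merely returns the corresponding marginal. Reassembling the $d$ surviving terms with weights $\Vert\alpha_i\Vert^2$ reproduces Eq.~\eqref{post_int_3}.

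The main obstacle is the convergence step rather than the bookkeeping: one must argue that the compounded factors actually vanish in the large-$\tilde{\mathcal{E}}_i$ limit, not merely decrease. This requires a macroscopic fraction of the discarded en-subs to satisfy the three conditions of the accompanying Remark ($\delta\neq 0$, $\sin(\theta)\neq 0$, and, for the diagonal part, $\sin(\phi)\neq 0$), which keeps each factor bounded away from unity so that the product $\prod\left(1-\cdots\right)\to 0$; for a generic thermal bath this is satisfied and the stated equality is understood in this asymptotic sense. A secondary point is to confirm that the single-en-sub channel of the preceding Proposition composes correctly even though the system spin is already correlated with the accessible en-subs, which holds because the interaction followed by the trace is a fixed completely positive map acting on the system qubit alone.
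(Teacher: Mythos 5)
Your proposal is correct and follows essentially the same route as the paper, whose entire proof is the single line ``trivially follows from the above remark'': your off-diagonal step is Remark~(ii) (the one-sided channel of Proposition~1(ii) killing the $\Omega_{\mathcal{S}:\mathcal{E}^{mac}_k\mathcal{E}^{mac}_l}$ blocks) and your diagonal step is Remark~(i) (the two-sided unital channel of Proposition~1(i) driving the system spin to $\mathds{1}_{\mathcal{S}}/2$), applied en-sub by en-sub exactly as the paper intends. The only difference is that you make explicit what the paper leaves tacit --- the extension of the single-en-sub lemmas to correlated inputs (valid since the one-sided map is multiplication by a scaled unitary and the two-sided map acts on the qubit factor alone) and the genericity conditions needed for the compounded contraction factors to actually force convergence --- which is added rigor, not a departure.
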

\begin{proof}
	The proof trivially follows from the above remark.
\end{proof}

%%%%%%%%%%%%%%%%%%%%%%%%%%%%%%%%%%%%%%%%%%%%%%%%
%%%%%%%%%%%%%%%%%%%%%%%%%%%%%%%%%%%%%%%%%%%%%%%%

Suppose the spin of the system is initially in maximally mixed state 
\ie~$\rho_\mathcal{S}=\frac{\mathds{1}_{\mathcal{S}}}{2}$. Let $\rho^{(k)}_{\mathcal{S}:\mathcal{E}^{mac}_i}$
denote the state of system plus $i$-th macro-environment after interaction
with $k$ en-subs.
It is straightforward to derive that  
\begin{equation}
	\begin{aligned}
		\rho^{(1)}_{\mathcal{S}:{\mathcal{E}}^{mac}_i}&=\left(U_{\mathcal{S}:\mathcal{E}_{i1}}\right)\rho_{\mathcal{S}:\mathcal{E}^{mac}_i}\left(U_{\mathcal{S}:\mathcal{E}_{i1}}\right)^\dagger \\
		&=\left(\frac{\mathds{1}}{2}\otimes\rho^\prime_{\mathcal{E}_{i1}}+\Omega_{1}\right)_{\mathcal{S}:\mathcal{E}_{i1}}\bigotimes_{j\neq 1}\rho_{\mathcal{E}_{ij}}
	\end{aligned}
\end{equation}
where
\begin{equation}
	\begin{aligned}
		\rho^\prime_{\mathcal{E}_{i1}}&=\left(\cos^2(\theta_{i1})\rho_{\mathcal{E}_{i1}}+\sin^2(\theta_{i1})\sigma_{\mathcal{E}_{i1}}\rho_{\mathcal{E}_{i1}}\sigma_{\mathcal{E}_{i1}}\right)\\
		&\equiv\left(\cos^2(\theta_{i1})\rho_{\mathcal{E}_{i1}}+\sin^2(\theta_{i1})\tilde{\rho}_{\mathcal{E}_{i1}}\right),\\
		\Omega_{1}
		&=\iota\sin(2\theta_{i1})\frac{\sigma^{i1}_{\mathcal{S}}}{2}\otimes\frac{\sigma_{\mathcal{E}_{i1}}\rho_{\mathcal{E}_{i1}}-\rho_{\mathcal{E}_{i1}}\sigma_{\mathcal{E}_{i1}}}{2}. \\
	\end{aligned}
\end{equation}
Note that $\Tr{\rho^\prime_{\mathcal{E}_{i1}}}=1$ and $\Tr_{\mathcal{S}}\left(\left(\Omega_1\right)_{\mathcal{S}:\mathcal{E}_{i1}}\right)\equiv 0$.
Similarly,
\begin{equation}
	\begin{aligned}
		\rho^{(2)}_{\mathcal{S}:{\mathcal{E}}^{mac}_i}&=\left(U_{\mathcal{S}:\mathcal{E}_{i2}}\right)\rho^{(1)}_{\mathcal{S}:\mathcal{E}^{mac}_i}\left(U_{\mathcal{S}:\mathcal{E}_{i2}}\right)^\dagger \\
		&=\left(\frac{\mathds{1}}{2}\otimes\rho^\prime_{\mathcal{E}_{i1}}\otimes\rho^\prime_{\mathcal{E}_{i2}}+\Omega_{2}\right)_{\mathcal{S}:\mathcal{E}_{i1}\mathcal{E}_{i2}}\bigotimes_{j\neq 1,2}\rho_{\mathcal{E}_{ij}}
	\end{aligned}
\end{equation}
where
\begin{equation}
	\begin{aligned}
		\rho^\prime_{\mathcal{E}_{i2}}&=\left(\cos^2(\theta_{i2})\rho_{\mathcal{E}_{i2}}+\sin^2(\theta_{i2})\sigma_{\mathcal{E}_{i2}}\rho_{\mathcal{E}_{i2}}\sigma_{\mathcal{E}_{i2}}\right)\\
		&\equiv\left(\cos^2(\theta_{i2})\rho_{\mathcal{E}_{i2}}+\sin^2(\theta_{i2})\tilde{\rho}_{\mathcal{E}_{i2}}\right),\\
	\end{aligned}
\end{equation}

and $\left(\Omega_2\right)_{\mathcal{S}:\mathcal{E}_{i1}\mathcal{E}_{i2}}$ is a traceless operator. More specifically, we have
\begin{equation}
	\Tr_\mathcal{S}\left(\left(\Omega_2\right)_{\mathcal{S}:\mathcal{E}_{i1}\mathcal{E}_{i2}}\right)\equiv 0.
\end{equation}
Suppose $m^\prime_i$ is the number of accessible en-subs in the macro-fraction $\mathcal{E}_{i}^{mac}$. It is straightforward that
\begin{equation}
	\label{trace_out_state}
	\begin{aligned}
		\rho^{\prime}_{{\mathcal{S}}:\mathcal{E}^{mac}_{i\setminus\tilde{\mathcal{E}}_i}}
		\approx\frac{1}{2}\mathds{1}_{\mathcal{S}}
		\bigotimes_{j:\mathcal{E}_{ij}\in\mathcal{E}^{mac}_{i\setminus\tilde{\mathcal{E}}_i}}
		\rho^\prime_{\mathcal{E}_{ij}}+\Omega_{{\mathcal{S}}:\mathcal{E}^{mac}_{i\setminus\tilde{\mathcal{E}}_i}}
	\end{aligned}
\end{equation}
\sth~
\begin{equation}
	\Tr_\mathcal{S}\left(\Omega_{{\mathcal{S}}:\mathcal{E}^{mac}_{i\setminus\tilde{\mathcal{E}}_i}}\right)\equiv 0.
\end{equation}
and
\begin{equation}
	\label{post_int_spin}
	\begin{aligned}
		\rho^\prime_{\mathcal{E}_{ij}}
		&=\cos^2(\theta_{ij})\rho_{\mathcal{E}_{ij}}+\sin^2(\theta_{ij})\tilde{\rho}_{\mathcal{E}_{ij}},\\
	\end{aligned}
\end{equation}
where
\begin{equation}
	\tilde{\rho}_{\mathcal{E}_{ij}}=\sigma_{\mathcal{E}_{ij}}\rho_{\mathcal{E}_{ij}}\sigma_{\mathcal{E}_{ij}}.
\end{equation}
After tracing out system's spin, we obtain
\begin{equation}
	\label{rho_prime_mac}
	\rho^{\prime}_{\mathcal{E}^{mac}_{i\setminus\tilde{\mathcal{E}}_i}}=\bigotimes_{j:\mathcal{E}_{ij}\in\mathcal{E}^{mac}_{i\setminus\tilde{\mathcal{E}}_i}}
	\rho^\prime_{\mathcal{E}_{ij}}.
\end{equation}
Therefore,
using Eqs.~\eqref{post_int_3} and \eqref{rho_prime_mac}, the post-interaction 
state of the system's spatial degree of freedom and the spins of the accessible environment (the spin of the system is traced out) is given by:
\begin{equation}
	\label{broadcast_macrostate_1}
	\varrho^{\prime}_{{\mathcal{S}}:\mathcal{E}_{\setminus\tilde{\mathcal{E}}}}
	\approx\sum_{i=1}^{d}\Vert\alpha_i\Vert^2\ketbra{\vec{x}_i}_{\mathcal{S}}
	\otimes\rho^{\prime}_{{\mathcal{E}^{mac}_{i{\setminus\tilde{\mathcal{E}}_{i}}}}}
	\bigotimes_{{j\neq i}}\rho_{\mathcal{E}^{mac}_{j{\setminus\tilde{\mathcal{E}}_{j}}}}.
\end{equation}
\section{Appendix--D: Formation of the spectral broadcast structure}
Remember that the spatial degrees of freedom of en-subs and spin of the system
are traced out. Additionally, a fraction of the environment $\tilde{\mathcal{E}}$ which is inaccessible by the observers is also traced out. As we will see, Eq.~\eqref{broadcast_macrostate_1} is a spectrum broadcast structure
where the information about the system's position is redundantly imprinted on
multiple fragments of environment-spins. Since we have discarded the spatial degree of freedom
of all en-subs, our environment $\mathcal{E}$ consists of only en-sub spins hereafter.
Let us now divide $\mathcal{E}$ into fragments $\mathcal{F}_1,\mathcal{F}_2,\cdots,\mathcal{F}_n$ in such a way that $\mathcal{F}_k$ for all
$k\in\{1,2,\cdots,n\}$ has randomly
chosen en-subs from all macro-fractions $\{\mathcal{E}^{mac}_j\}_{j\in\{1,2,\cdots,d\}}$.
This can be achieved by applying a random permutation on all en-subs
in Eq.~\eqref{broadcast_macrostate_1} and then dividing them into $n$ fragments of equal size.
Let us now denote the post-interaction state of the environment corresponding to the system's
position $\vec{x}_i$ by
\begin{equation}
	\chi_i=\rho^{\prime}_{{\mathcal{E}^{mac}_{i{\setminus\tilde{\mathcal{E}}_{i}}}}}
	\bigotimes_{{j\neq i}}\rho_{\mathcal{E}^{mac}_{j{\setminus\tilde{\mathcal{E}}_{j}}}}.
\end{equation}
With re-indexing, the state $\chi_i$ can be rewritten as:
\begin{equation}
	\chi_i=\rho_{\mathcal{E}_{11}}\otimes\rho_{\mathcal{E}_{12}}\otimes\cdots\otimes
	\underbrace{\rho^{\prime}_{\mathcal{E}_{i1}}\otimes\rho^{\prime}_{\mathcal{E}_{i2}}\otimes\rho^{\prime}_{\mathcal{E}_{i3}}\cdots\otimes\rho^{\prime}_{\mathcal{E}_{im^\prime_i}}}_{\rho^{\prime}_{\mathcal{E}^{mac}_{i}}}
	\otimes\rho_{\mathcal{E}_{(i+1)1}}\otimes\rho_{\mathcal{E}_{(i+1)2}}\otimes\cdots,
\end{equation}
where $\rho_{\mathcal{E}kl}$ and $\rho^{\prime}_{\mathcal{E}_{kl}}$ are initial and post-interaction states of
the $kl$-th en-sub, respectively. Remember that states $\{\rho^{\prime}_{\mathcal{E}_{kl}}\}$ are given by
Eq.~\eqref{post_int_spin}.
After a random shuffling (permutation) and re-indexing on en-subs, we obtain
\begin{equation}
	\begin{aligned}
		\chi_i=&\underbrace{\rho_{\mathcal{E}_1}\otimes\rho^{\prime}_{\mathcal{E}_2}\otimes\rho_{\mathcal{E}_3}\cdots}_{\mathcal{F}_1}\otimes
		\underbrace{\rho^{\prime}_{{\mathcal{E}_{r_1+1}}}\otimes\rho^{\prime}_{\mathcal{E}_{r_1+2}}\otimes\rho_{\mathcal{E}_{r_1+1}}\cdots}_{\mathcal{F}_2}\otimes
		\underbrace{\rho^{\prime}_{\mathcal{E}_{r_1+r_2+1}}\otimes\rho_{\mathcal{E}_{r_1+r_2+2}}\otimes\rho^{\prime}_{r_1+r_2+3}\cdots}_{\mathcal{F}_3}
		\\ 
		&\otimes
		\underbrace{\rho^{\prime}_{\sum_k^{n-1} r_k+1}\otimes\rho_{\sum_k^{n-1} r_k+2}\otimes\rho_{\sum_k^{n-1} r_k+3}\cdots}_{\mathcal{F}_n} \\
		\equiv&{\xi^{\mathcal{F}_1}_i}\otimes{\xi^{\mathcal{F}_2}_i}\otimes{\xi^{\mathcal{F}_3}_i}\otimes\cdots
		\otimes{\xi^{\mathcal{F}_n}_i},\\
	\end{aligned}
\end{equation}
where $\xi^{\mathcal{F}_k}_i$ is the state of $k$-th fragment corresponding to the system's position
$\vec{x}_i$. Here, $\{\rho_{\mathcal{E}_k}\}$ and $\{\rho^\prime_{\mathcal{E}_l}\}$ are randomly sampled from $
\bigotimes_{{j\neq i}}\rho_{\mathcal{E}^{mac}_{j{\setminus\tilde{\mathcal{E}}_{j}}}}$
and $\rho^{\prime}_{{\mathcal{E}^{mac}_{i{\setminus\tilde{\mathcal{E}}_{i}}}}}$, respectively. Note that $\xi^{\mathcal{F}_k}_i$ has multiple perturbed (post-interaction)
and unperturbed (initial) en-subs in the product state. 
Eq.~\eqref{broadcast_macrostate_1} can now be re-expressed as:
\begin{equation}
	\label{final_broadcast_structure}
	\begin{aligned}
		\varrho^{\prime}_{\mathcal{S}:\mathcal{E}}= &\sum_{i=1}^{d}\Vert\alpha_i\Vert^2\ketbra{\vec{x}_i}_{\mathcal{S}}\otimes{\xi^{\mathcal{F}_1}_i}\otimes{\xi^{\mathcal{F}_2}_i}\otimes{\xi^{\mathcal{F}_3}_i}\otimes\cdots\otimes{\xi^{\mathcal{F}_n}_i}. \\
	\end{aligned}
\end{equation}
Let us now prove that states $\xi^{\mathcal{F}_k}_i$ are perfectly distinguishable \ie~
\begin{equation}
	\xi^{\mathcal{F}_k}_i\xi^{\mathcal{F}_k}_j=0\:\forall \: i\neq j, \quad k=1,2,3,\cdots,n,
\end{equation}
or equivalently, the fidelity of states $\xi^{\mathcal{F}_k}_i$ and $\xi^{\mathcal{F}_k}_j$ for $i\neq j$ is zero:
\begin{equation}
	F\left(\xi^{\mathcal{F}_k}_i,\xi^{\mathcal{F}_k}_j\right)=0
\end{equation}
The fidelity of two density matrices $\rho$ and $\sigma$ is defined as
\begin{equation}
	F\left(\rho,\sigma\right)=\Tr\sqrt{\rho^{1/2}\sigma\rho^{1/2}}.
\end{equation}
Further, the fidelity is multiplicative under tensor products \ie
\begin{equation}
	F\left(\rho_1\otimes\rho_2,\sigma_1\otimes\sigma_2\right)=F\left(\rho_1,\sigma_1\right)F\left(\rho_2,\sigma_2\right),
\end{equation}
Since the fidelity for same states is one \ie~$F(\rho,\rho)=1$, we obtain
\begin{equation}
	F\left(\xi^{\mathcal{F}_k}_i,\xi^{\mathcal{F}_k}_{i^\prime}\right)=
	\prod_{\substack{j\in\{i,i^\prime\}\\
			j^\prime:\mathcal{E}_{jj^\prime}\in\mathcal{F}_k}}
	F\left(\rho_{\mathcal{E}_{jj^{\prime}}},
	\rho^{\prime}_{\mathcal{E}_{jj^\prime}}\right).
\end{equation}
The fidelity corresponding to unperturbed en-subs within fragment $\mathcal{F}_k$ is one. However, for perturbed en-subs the fidelity is strictly less than one given that $\sin(\theta_{jj^\prime})\neq 0$. Therefore, in the asymptomatic case where the size of the environment is infinitely large, we have
\begin{equation}
	F\left(\xi^{\mathcal{F}_k}_i,\xi^{\mathcal{F}_k}_{i^\prime}\right)\approx0 \quad \forall \: i\neq i^\prime, \quad k=1,2,3,\cdots,n.
\end{equation}
This proves our main claim.

%\bibliographystyle{apsrev4-1} 
%\bibliography{ref} 

\begin{thebibliography}{59}%
	\makeatletter
	\providecommand \@ifxundefined [1]{%
		\@ifx{#1\undefined}
	}%
	\providecommand \@ifnum [1]{%
		\ifnum #1\expandafter \@firstoftwo
		\else \expandafter \@secondoftwo
		\fi
	}%
	\providecommand \@ifx [1]{%
		\ifx #1\expandafter \@firstoftwo
		\else \expandafter \@secondoftwo
		\fi
	}%
	\providecommand \natexlab [1]{#1}%
	\providecommand \enquote  [1]{``#1''}%
	\providecommand \bibnamefont  [1]{#1}%
	\providecommand \bibfnamefont [1]{#1}%
	\providecommand \citenamefont [1]{#1}%
	\providecommand \href@noop [0]{\@secondoftwo}%
	\providecommand \href [0]{\begingroup \@sanitize@url \@href}%
	\providecommand \@href[1]{\@@startlink{#1}\@@href}%
	\providecommand \@@href[1]{\endgroup#1\@@endlink}%
	\providecommand \@sanitize@url [0]{\catcode `\\12\catcode `\$12\catcode
		`\&12\catcode `\#12\catcode `\^12\catcode `\_12\catcode `\%12\relax}%
	\providecommand \@@startlink[1]{}%
	\providecommand \@@endlink[0]{}%
	\providecommand \url  [0]{\begingroup\@sanitize@url \@url }%
	\providecommand \@url [1]{\endgroup\@href {#1}{\urlprefix }}%
	\providecommand \urlprefix  [0]{URL }%
	\providecommand \Eprint [0]{\href }%
	\providecommand \doibase [0]{https://doi.org/}%
	\providecommand \selectlanguage [0]{\@gobble}%
	\providecommand \bibinfo  [0]{\@secondoftwo}%
	\providecommand \bibfield  [0]{\@secondoftwo}%
	\providecommand \translation [1]{[#1]}%
	\providecommand \BibitemOpen [0]{}%
	\providecommand \bibitemStop [0]{}%
	\providecommand \bibitemNoStop [0]{.\EOS\space}%
	\providecommand \EOS [0]{\spacefactor3000\relax}%
	\providecommand \BibitemShut  [1]{\csname bibitem#1\endcsname}%
	\let\auto@bib@innerbib\@empty
	%</preamble>
	\bibitem [{\citenamefont {Zurek}(2022)}]{e_24111520}%
	\BibitemOpen
	\bibfield  {author} {\bibinfo {author} {\bibfnamefont {W.~H.}\ \bibnamefont
			{Zurek}},\ }\bibfield  {title} {\bibinfo {title} {Quantum theory of the
			classical: Einselection, envariance, quantum darwinism and extantons},\
	}\bibfield  {journal} {\bibinfo  {journal} {Entropy}\ }\textbf {\bibinfo
		{volume} {24}},\ \href {https://doi.org/10.3390/e24111520}
	{10.3390/e24111520} (\bibinfo {year} {2022})\BibitemShut {NoStop}%
	\bibitem [{\citenamefont {Zurek}(2009)}]{Zurek_2009}%
	\BibitemOpen
	\bibfield  {author} {\bibinfo {author} {\bibfnamefont {W.~H.}\ \bibnamefont
			{Zurek}},\ }\bibfield  {title} {\bibinfo {title} {Quantum darwinism},\ }\href
	{https://doi.org/10.1038/nphys1202} {\bibfield  {journal} {\bibinfo
			{journal} {Nature Physics}\ }\textbf {\bibinfo {volume} {5}},\ \bibinfo
		{pages} {181} (\bibinfo {year} {2009})}\BibitemShut {NoStop}%
	\bibitem [{\citenamefont {Zurek}(2014)}]{10.1063/PT.3.2550}%
	\BibitemOpen
	\bibfield  {author} {\bibinfo {author} {\bibfnamefont {W.~H.}\ \bibnamefont
			{Zurek}},\ }\bibfield  {title} {\bibinfo {title} {{Quantum Darwinism,
				classical reality, and the randomness of quantum jumps}},\ }\href
	{https://doi.org/10.1063/PT.3.2550} {\bibfield  {journal} {\bibinfo
			{journal} {Physics Today}\ }\textbf {\bibinfo {volume} {67}},\ \bibinfo
		{pages} {44} (\bibinfo {year} {2014})}\BibitemShut {NoStop}%
	\bibitem [{\citenamefont {Zurek}(2003{\natexlab{a}})}]{RevModPhys_75.715}%
	\BibitemOpen
	\bibfield  {author} {\bibinfo {author} {\bibfnamefont {W.~H.}\ \bibnamefont
			{Zurek}},\ }\bibfield  {title} {\bibinfo {title} {Decoherence, einselection,
			and the quantum origins of the classical},\ }\href
	{https://doi.org/10.1103/RevModPhys.75.715} {\bibfield  {journal} {\bibinfo
			{journal} {Rev. Mod. Phys.}\ }\textbf {\bibinfo {volume} {75}},\ \bibinfo
		{pages} {715} (\bibinfo {year} {2003}{\natexlab{a}})}\BibitemShut {NoStop}%
	\bibitem [{\citenamefont {Zurek}(2003{\natexlab{b}})}]{PhysRevLett_90.120404}%
	\BibitemOpen
	\bibfield  {author} {\bibinfo {author} {\bibfnamefont {W.~H.}\ \bibnamefont
			{Zurek}},\ }\bibfield  {title} {\bibinfo {title} {Environment-assisted
			invariance, entanglement, and probabilities in quantum physics},\ }\href
	{https://doi.org/10.1103/PhysRevLett.90.120404} {\bibfield  {journal}
		{\bibinfo  {journal} {Phys. Rev. Lett.}\ }\textbf {\bibinfo {volume} {90}},\
		\bibinfo {pages} {120404} (\bibinfo {year} {2003}{\natexlab{b}})}\BibitemShut
	{NoStop}%
	\bibitem [{\citenamefont {Ollivier}\ \emph {et~al.}(2004)\citenamefont
		{Ollivier}, \citenamefont {Poulin},\ and\ \citenamefont
		{Zurek}}]{PhysRevLett_93.220401}%
	\BibitemOpen
	\bibfield  {author} {\bibinfo {author} {\bibfnamefont {H.}~\bibnamefont
			{Ollivier}}, \bibinfo {author} {\bibfnamefont {D.}~\bibnamefont {Poulin}},\
		and\ \bibinfo {author} {\bibfnamefont {W.~H.}\ \bibnamefont {Zurek}},\
	}\bibfield  {title} {\bibinfo {title} {Objective properties from subjective
			quantum states: Environment as a witness},\ }\href
	{https://doi.org/10.1103/PhysRevLett.93.220401} {\bibfield  {journal}
		{\bibinfo  {journal} {Phys. Rev. Lett.}\ }\textbf {\bibinfo {volume} {93}},\
		\bibinfo {pages} {220401} (\bibinfo {year} {2004})}\BibitemShut {NoStop}%
	\bibitem [{\citenamefont {Ollivier}\ \emph {et~al.}(2005)\citenamefont
		{Ollivier}, \citenamefont {Poulin},\ and\ \citenamefont
		{Zurek}}]{PhysRevA_72.042113}%
	\BibitemOpen
	\bibfield  {author} {\bibinfo {author} {\bibfnamefont {H.}~\bibnamefont
			{Ollivier}}, \bibinfo {author} {\bibfnamefont {D.}~\bibnamefont {Poulin}},\
		and\ \bibinfo {author} {\bibfnamefont {W.~H.}\ \bibnamefont {Zurek}},\
	}\bibfield  {title} {\bibinfo {title} {Environment as a witness: Selective
			proliferation of information and emergence of objectivity in a quantum
			universe},\ }\href {https://doi.org/10.1103/PhysRevA.72.042113} {\bibfield
		{journal} {\bibinfo  {journal} {Phys. Rev. A}\ }\textbf {\bibinfo {volume}
			{72}},\ \bibinfo {pages} {042113} (\bibinfo {year} {2005})}\BibitemShut
	{NoStop}%
	\bibitem [{\citenamefont {Blume-Kohout}\ and\ \citenamefont
		{Zurek}(2006)}]{PhysRevA_73.062310}%
	\BibitemOpen
	\bibfield  {author} {\bibinfo {author} {\bibfnamefont {R.}~\bibnamefont
			{Blume-Kohout}}\ and\ \bibinfo {author} {\bibfnamefont {W.~H.}\ \bibnamefont
			{Zurek}},\ }\bibfield  {title} {\bibinfo {title} {Quantum darwinism:
			Entanglement, branches, and the emergent classicality of redundantly stored
			quantum information},\ }\href {https://doi.org/10.1103/PhysRevA.73.062310}
	{\bibfield  {journal} {\bibinfo  {journal} {Phys. Rev. A}\ }\textbf {\bibinfo
			{volume} {73}},\ \bibinfo {pages} {062310} (\bibinfo {year}
		{2006})}\BibitemShut {NoStop}%
	\bibitem [{\citenamefont {Riedel}\ \emph {et~al.}(2016)\citenamefont {Riedel},
		\citenamefont {Zurek},\ and\ \citenamefont {Zwolak}}]{PhysRevA.93.032126}%
	\BibitemOpen
	\bibfield  {author} {\bibinfo {author} {\bibfnamefont {C.~J.}\ \bibnamefont
			{Riedel}}, \bibinfo {author} {\bibfnamefont {W.~H.}\ \bibnamefont {Zurek}},\
		and\ \bibinfo {author} {\bibfnamefont {M.}~\bibnamefont {Zwolak}},\
	}\bibfield  {title} {\bibinfo {title} {Objective past of a quantum universe:
			Redundant records of consistent histories},\ }\href
	{https://doi.org/10.1103/PhysRevA.93.032126} {\bibfield  {journal} {\bibinfo
			{journal} {Phys. Rev. A}\ }\textbf {\bibinfo {volume} {93}},\ \bibinfo
		{pages} {032126} (\bibinfo {year} {2016})}\BibitemShut {NoStop}%
	\bibitem [{\citenamefont {Horodecki}\ \emph {et~al.}(2015)\citenamefont
		{Horodecki}, \citenamefont {Korbicz},\ and\ \citenamefont
		{Horodecki}}]{PhysRevA_91.032122}%
	\BibitemOpen
	\bibfield  {author} {\bibinfo {author} {\bibfnamefont {R.}~\bibnamefont
			{Horodecki}}, \bibinfo {author} {\bibfnamefont {J.~K.}\ \bibnamefont
			{Korbicz}},\ and\ \bibinfo {author} {\bibfnamefont {P.}~\bibnamefont
			{Horodecki}},\ }\bibfield  {title} {\bibinfo {title} {Quantum origins of
			objectivity},\ }\href {https://doi.org/10.1103/PhysRevA.91.032122} {\bibfield
		{journal} {\bibinfo  {journal} {Phys. Rev. A}\ }\textbf {\bibinfo {volume}
			{91}},\ \bibinfo {pages} {032122} (\bibinfo {year} {2015})}\BibitemShut
	{NoStop}%
	\bibitem [{\citenamefont {Riedel}(2017)}]{PhysRevLett_118.120402}%
	\BibitemOpen
	\bibfield  {author} {\bibinfo {author} {\bibfnamefont {C.~J.}\ \bibnamefont
			{Riedel}},\ }\bibfield  {title} {\bibinfo {title} {Classical branch structure
			from spatial redundancy in a many-body wave function},\ }\href
	{https://doi.org/10.1103/PhysRevLett.118.120402} {\bibfield  {journal}
		{\bibinfo  {journal} {Phys. Rev. Lett.}\ }\textbf {\bibinfo {volume} {118}},\
		\bibinfo {pages} {120402} (\bibinfo {year} {2017})}\BibitemShut {NoStop}%
	\bibitem [{\citenamefont {Korbicz}(2021)}]{Korbicz2021roadstoobjectivity}%
	\BibitemOpen
	\bibfield  {author} {\bibinfo {author} {\bibfnamefont {J.~K.}\ \bibnamefont
			{Korbicz}},\ }\bibfield  {title} {\bibinfo {title} {Roads to objectivity:
			{Q}uantum {D}arwinism, {S}pectrum {B}roadcast {S}tructures, and {S}trong
			quantum {D}arwinism – a review},\ }\href
	{https://doi.org/10.22331/q-2021-11-08-571} {\bibfield  {journal} {\bibinfo
			{journal} {{Quantum}}\ }\textbf {\bibinfo {volume} {5}},\ \bibinfo {pages}
		{571} (\bibinfo {year} {2021})}\BibitemShut {NoStop}%
	\bibitem [{\citenamefont {Le}\ and\ \citenamefont
		{Olaya-Castro}(2019)}]{PhysRevLett.122.010403}%
	\BibitemOpen
	\bibfield  {author} {\bibinfo {author} {\bibfnamefont {T.~P.}\ \bibnamefont
			{Le}}\ and\ \bibinfo {author} {\bibfnamefont {A.}~\bibnamefont
			{Olaya-Castro}},\ }\bibfield  {title} {\bibinfo {title} {Strong quantum
			darwinism and strong independence are equivalent to spectrum broadcast
			structure},\ }\href {https://doi.org/10.1103/PhysRevLett.122.010403}
	{\bibfield  {journal} {\bibinfo  {journal} {Phys. Rev. Lett.}\ }\textbf
		{\bibinfo {volume} {122}},\ \bibinfo {pages} {010403} (\bibinfo {year}
		{2019})}\BibitemShut {NoStop}%
	\bibitem [{\citenamefont {Brand{\~a}o}\ \emph {et~al.}(2015)\citenamefont
		{Brand{\~a}o}, \citenamefont {Piani},\ and\ \citenamefont
		{Horodecki}}]{Brandao2015}%
	\BibitemOpen
	\bibfield  {author} {\bibinfo {author} {\bibfnamefont {F.~G. S.~L.}\
			\bibnamefont {Brand{\~a}o}}, \bibinfo {author} {\bibfnamefont
			{M.}~\bibnamefont {Piani}},\ and\ \bibinfo {author} {\bibfnamefont
			{P.}~\bibnamefont {Horodecki}},\ }\bibfield  {title} {\bibinfo {title}
		{Generic emergence of classical features in quantum darwinism},\ }\href
	{https://doi.org/10.1038/ncomms8908} {\bibfield  {journal} {\bibinfo
			{journal} {Nature Communications}\ }\textbf {\bibinfo {volume} {6}},\
		\bibinfo {pages} {7908} (\bibinfo {year} {2015})}\BibitemShut {NoStop}%
	\bibitem [{\citenamefont {Knott}\ \emph {et~al.}(2018)\citenamefont {Knott},
		\citenamefont {Tufarelli}, \citenamefont {Piani},\ and\ \citenamefont
		{Adesso}}]{PhysRevLett.121.160401}%
	\BibitemOpen
	\bibfield  {author} {\bibinfo {author} {\bibfnamefont {P.~A.}\ \bibnamefont
			{Knott}}, \bibinfo {author} {\bibfnamefont {T.}~\bibnamefont {Tufarelli}},
		\bibinfo {author} {\bibfnamefont {M.}~\bibnamefont {Piani}},\ and\ \bibinfo
		{author} {\bibfnamefont {G.}~\bibnamefont {Adesso}},\ }\bibfield  {title}
	{\bibinfo {title} {Generic emergence of objectivity of observables in
			infinite dimensions},\ }\href
	{https://doi.org/10.1103/PhysRevLett.121.160401} {\bibfield  {journal}
		{\bibinfo  {journal} {Phys. Rev. Lett.}\ }\textbf {\bibinfo {volume} {121}},\
		\bibinfo {pages} {160401} (\bibinfo {year} {2018})}\BibitemShut {NoStop}%
	\bibitem [{\citenamefont {Qi}\ and\ \citenamefont
		{Ranard}(2021)}]{Qi2021emergent}%
	\BibitemOpen
	\bibfield  {author} {\bibinfo {author} {\bibfnamefont {X.-L.}\ \bibnamefont
			{Qi}}\ and\ \bibinfo {author} {\bibfnamefont {D.}~\bibnamefont {Ranard}},\
	}\bibfield  {title} {\bibinfo {title} {Emergent classicality in general
			multipartite states and channels},\ }\href
	{https://doi.org/10.22331/q-2021-09-28-555} {\bibfield  {journal} {\bibinfo
			{journal} {{Quantum}}\ }\textbf {\bibinfo {volume} {5}},\ \bibinfo {pages}
		{555} (\bibinfo {year} {2021})}\BibitemShut {NoStop}%
	\bibitem [{\citenamefont {Fu}(2021)}]{PhysRevA.103.042210}%
	\BibitemOpen
	\bibfield  {author} {\bibinfo {author} {\bibfnamefont {H.-F.}\ \bibnamefont
			{Fu}},\ }\bibfield  {title} {\bibinfo {title} {Uniqueness of the observable
			leaving redundant imprints in the environment in the context of quantum
			darwinism},\ }\href {https://doi.org/10.1103/PhysRevA.103.042210} {\bibfield
		{journal} {\bibinfo  {journal} {Phys. Rev. A}\ }\textbf {\bibinfo {volume}
			{103}},\ \bibinfo {pages} {042210} (\bibinfo {year} {2021})}\BibitemShut
	{NoStop}%
	\bibitem [{\citenamefont {Blume-Kohout}\ and\ \citenamefont
		{Zurek}(2005)}]{Blume-Kohout_2005}%
	\BibitemOpen
	\bibfield  {author} {\bibinfo {author} {\bibfnamefont {R.}~\bibnamefont
			{Blume-Kohout}}\ and\ \bibinfo {author} {\bibfnamefont {W.~H.}\ \bibnamefont
			{Zurek}},\ }\bibfield  {title} {\bibinfo {title} {A simple example of
			``quantum darwinism'': Redundant information storage in many-spin
			environments},\ }\href {https://doi.org/10.1007/s10701-005-7352-5} {\bibfield
		{journal} {\bibinfo  {journal} {Foundations of Physics}\ }\textbf {\bibinfo
			{volume} {35}},\ \bibinfo {pages} {1857} (\bibinfo {year}
		{2005})}\BibitemShut {NoStop}%
	\bibitem [{\citenamefont {Touil}\ \emph {et~al.}(2022)\citenamefont {Touil},
		\citenamefont {Yan}, \citenamefont {Girolami}, \citenamefont {Deffner},\ and\
		\citenamefont {Zurek}}]{PhysRevLett_128.010401}%
	\BibitemOpen
	\bibfield  {author} {\bibinfo {author} {\bibfnamefont {A.}~\bibnamefont
			{Touil}}, \bibinfo {author} {\bibfnamefont {B.}~\bibnamefont {Yan}}, \bibinfo
		{author} {\bibfnamefont {D.}~\bibnamefont {Girolami}}, \bibinfo {author}
		{\bibfnamefont {S.}~\bibnamefont {Deffner}},\ and\ \bibinfo {author}
		{\bibfnamefont {W.~H.}\ \bibnamefont {Zurek}},\ }\bibfield  {title} {\bibinfo
		{title} {Eavesdropping on the decohering environment: Quantum darwinism,
			amplification, and the origin of objective classical reality},\ }\href
	{https://doi.org/10.1103/PhysRevLett.128.010401} {\bibfield  {journal}
		{\bibinfo  {journal} {Phys. Rev. Lett.}\ }\textbf {\bibinfo {volume} {128}},\
		\bibinfo {pages} {010401} (\bibinfo {year} {2022})}\BibitemShut {NoStop}%
	\bibitem [{\citenamefont {Zwolak}\ \emph
		{et~al.}(2016{\natexlab{a}})\citenamefont {Zwolak}, \citenamefont {Riedel},\
		and\ \citenamefont {Zurek}}]{Zwolak2016}%
	\BibitemOpen
	\bibfield  {author} {\bibinfo {author} {\bibfnamefont {M.}~\bibnamefont
			{Zwolak}}, \bibinfo {author} {\bibfnamefont {C.~J.}\ \bibnamefont {Riedel}},\
		and\ \bibinfo {author} {\bibfnamefont {W.~H.}\ \bibnamefont {Zurek}},\
	}\bibfield  {title} {\bibinfo {title} {Amplification, decoherence and the
			acquisition of information by spin environments},\ }\href
	{https://doi.org/10.1038/srep25277} {\bibfield  {journal} {\bibinfo
			{journal} {Scientific Reports}\ }\textbf {\bibinfo {volume} {6}},\ \bibinfo
		{pages} {25277} (\bibinfo {year} {2016}{\natexlab{a}})}\BibitemShut {NoStop}%
	\bibitem [{\citenamefont {Zwolak}\ \emph {et~al.}(2014)\citenamefont {Zwolak},
		\citenamefont {Riedel},\ and\ \citenamefont
		{Zurek}}]{PhysRevLett_112.140406}%
	\BibitemOpen
	\bibfield  {author} {\bibinfo {author} {\bibfnamefont {M.}~\bibnamefont
			{Zwolak}}, \bibinfo {author} {\bibfnamefont {C.~J.}\ \bibnamefont {Riedel}},\
		and\ \bibinfo {author} {\bibfnamefont {W.~H.}\ \bibnamefont {Zurek}},\
	}\bibfield  {title} {\bibinfo {title} {Amplification, redundancy, and quantum
			chernoff information},\ }\href
	{https://doi.org/10.1103/PhysRevLett.112.140406} {\bibfield  {journal}
		{\bibinfo  {journal} {Phys. Rev. Lett.}\ }\textbf {\bibinfo {volume} {112}},\
		\bibinfo {pages} {140406} (\bibinfo {year} {2014})}\BibitemShut {NoStop}%
	\bibitem [{\citenamefont {Riedel}\ and\ \citenamefont
		{Zurek}(2010)}]{PhysRevLett_105.020404}%
	\BibitemOpen
	\bibfield  {author} {\bibinfo {author} {\bibfnamefont {C.~J.}\ \bibnamefont
			{Riedel}}\ and\ \bibinfo {author} {\bibfnamefont {W.~H.}\ \bibnamefont
			{Zurek}},\ }\bibfield  {title} {\bibinfo {title} {Quantum darwinism in an
			everyday environment: Huge redundancy in scattered photons},\ }\href
	{https://doi.org/10.1103/PhysRevLett.105.020404} {\bibfield  {journal}
		{\bibinfo  {journal} {Phys. Rev. Lett.}\ }\textbf {\bibinfo {volume} {105}},\
		\bibinfo {pages} {020404} (\bibinfo {year} {2010})}\BibitemShut {NoStop}%
	\bibitem [{\citenamefont {Korbicz}\ \emph {et~al.}(2014)\citenamefont
		{Korbicz}, \citenamefont {Horodecki},\ and\ \citenamefont
		{Horodecki}}]{PhysRevLett_112.120402}%
	\BibitemOpen
	\bibfield  {author} {\bibinfo {author} {\bibfnamefont {J.~K.}\ \bibnamefont
			{Korbicz}}, \bibinfo {author} {\bibfnamefont {P.}~\bibnamefont {Horodecki}},\
		and\ \bibinfo {author} {\bibfnamefont {R.}~\bibnamefont {Horodecki}},\
	}\bibfield  {title} {\bibinfo {title} {Objectivity in a noisy photonic
			environment through quantum state information broadcasting},\ }\href
	{https://doi.org/10.1103/PhysRevLett.112.120402} {\bibfield  {journal}
		{\bibinfo  {journal} {Phys. Rev. Lett.}\ }\textbf {\bibinfo {volume} {112}},\
		\bibinfo {pages} {120402} (\bibinfo {year} {2014})}\BibitemShut {NoStop}%
	\bibitem [{\citenamefont {Riedel}\ and\ \citenamefont
		{Zurek}(2011)}]{Jess_2011}%
	\BibitemOpen
	\bibfield  {author} {\bibinfo {author} {\bibfnamefont {C.~J.}\ \bibnamefont
			{Riedel}}\ and\ \bibinfo {author} {\bibfnamefont {W.~H.}\ \bibnamefont
			{Zurek}},\ }\bibfield  {title} {\bibinfo {title} {Redundant information from
			thermal illumination: quantum darwinism in scattered photons},\ }\href
	{https://doi.org/10.1088/1367-2630/13/7/073038} {\bibfield  {journal}
		{\bibinfo  {journal} {New Journal of Physics}\ }\textbf {\bibinfo {volume}
			{13}},\ \bibinfo {pages} {073038} (\bibinfo {year} {2011})}\BibitemShut
	{NoStop}%
	\bibitem [{\citenamefont {Tuziemski}\ and\ \citenamefont
		{Korbicz}(2015{\natexlab{a}})}]{Tuziemski_2015}%
	\BibitemOpen
	\bibfield  {author} {\bibinfo {author} {\bibfnamefont {J.}~\bibnamefont
			{Tuziemski}}\ and\ \bibinfo {author} {\bibfnamefont {J.~K.}\ \bibnamefont
			{Korbicz}},\ }\bibfield  {title} {\bibinfo {title} {Dynamical objectivity in
			quantum brownian motion},\ }\href
	{https://doi.org/10.1209/0295-5075/112/40008} {\bibfield  {journal} {\bibinfo
			{journal} {Europhysics Letters}\ }\textbf {\bibinfo {volume} {112}},\
		\bibinfo {pages} {40008} (\bibinfo {year} {2015}{\natexlab{a}})}\BibitemShut
	{NoStop}%
	\bibitem [{\citenamefont {Blume-Kohout}\ and\ \citenamefont
		{Zurek}(2008)}]{PhysRevLett_101.240405}%
	\BibitemOpen
	\bibfield  {author} {\bibinfo {author} {\bibfnamefont {R.}~\bibnamefont
			{Blume-Kohout}}\ and\ \bibinfo {author} {\bibfnamefont {W.~H.}\ \bibnamefont
			{Zurek}},\ }\bibfield  {title} {\bibinfo {title} {Quantum darwinism in
			quantum brownian motion},\ }\href
	{https://doi.org/10.1103/PhysRevLett.101.240405} {\bibfield  {journal}
		{\bibinfo  {journal} {Phys. Rev. Lett.}\ }\textbf {\bibinfo {volume} {101}},\
		\bibinfo {pages} {240405} (\bibinfo {year} {2008})}\BibitemShut {NoStop}%
	\bibitem [{\citenamefont {Paz}\ and\ \citenamefont
		{Roncaglia}(2009)}]{PhysRevA.80.042111}%
	\BibitemOpen
	\bibfield  {author} {\bibinfo {author} {\bibfnamefont {J.~P.}\ \bibnamefont
			{Paz}}\ and\ \bibinfo {author} {\bibfnamefont {A.~J.}\ \bibnamefont
			{Roncaglia}},\ }\bibfield  {title} {\bibinfo {title} {Redundancy of classical
			and quantum correlations during decoherence},\ }\href
	{https://doi.org/10.1103/PhysRevA.80.042111} {\bibfield  {journal} {\bibinfo
			{journal} {Phys. Rev. A}\ }\textbf {\bibinfo {volume} {80}},\ \bibinfo
		{pages} {042111} (\bibinfo {year} {2009})}\BibitemShut {NoStop}%
	\bibitem [{\citenamefont {Tuziemski}\ and\ \citenamefont
		{Korbicz}(2016)}]{Tuziemski_2016}%
	\BibitemOpen
	\bibfield  {author} {\bibinfo {author} {\bibfnamefont {J.}~\bibnamefont
			{Tuziemski}}\ and\ \bibinfo {author} {\bibfnamefont {J.~K.}\ \bibnamefont
			{Korbicz}},\ }\bibfield  {title} {\bibinfo {title} {Analytical studies of
			spectrum broadcast structures in quantum brownian motion},\ }\href
	{https://doi.org/10.1088/1751-8113/49/44/445301} {\bibfield  {journal}
		{\bibinfo  {journal} {Journal of Physics A: Mathematical and Theoretical}\
		}\textbf {\bibinfo {volume} {49}},\ \bibinfo {pages} {445301} (\bibinfo
		{year} {2016})}\BibitemShut {NoStop}%
	\bibitem [{\citenamefont {Tuziemski}\ and\ \citenamefont
		{Korbicz}(2015{\natexlab{b}})}]{photonics2010228}%
	\BibitemOpen
	\bibfield  {author} {\bibinfo {author} {\bibfnamefont {J.}~\bibnamefont
			{Tuziemski}}\ and\ \bibinfo {author} {\bibfnamefont {J.~K.}\ \bibnamefont
			{Korbicz}},\ }\bibfield  {title} {\bibinfo {title} {Objectivisation in
			simplified quantum brownian motion models},\ }\href
	{https://doi.org/10.3390/photonics2010228} {\bibfield  {journal} {\bibinfo
			{journal} {Photonics}\ }\textbf {\bibinfo {volume} {2}},\ \bibinfo {pages}
		{228} (\bibinfo {year} {2015}{\natexlab{b}})}\BibitemShut {NoStop}%
	\bibitem [{\citenamefont {Zwolak}\ \emph {et~al.}(2009)\citenamefont {Zwolak},
		\citenamefont {Quan},\ and\ \citenamefont {Zurek}}]{PhysRevLett_103.110402}%
	\BibitemOpen
	\bibfield  {author} {\bibinfo {author} {\bibfnamefont {M.}~\bibnamefont
			{Zwolak}}, \bibinfo {author} {\bibfnamefont {H.~T.}\ \bibnamefont {Quan}},\
		and\ \bibinfo {author} {\bibfnamefont {W.~H.}\ \bibnamefont {Zurek}},\
	}\bibfield  {title} {\bibinfo {title} {Quantum darwinism in a mixed
			environment},\ }\href {https://doi.org/10.1103/PhysRevLett.103.110402}
	{\bibfield  {journal} {\bibinfo  {journal} {Phys. Rev. Lett.}\ }\textbf
		{\bibinfo {volume} {103}},\ \bibinfo {pages} {110402} (\bibinfo {year}
		{2009})}\BibitemShut {NoStop}%
	\bibitem [{\citenamefont {Zwolak}\ \emph {et~al.}(2010)\citenamefont {Zwolak},
		\citenamefont {Quan},\ and\ \citenamefont {Zurek}}]{PhysRevA_81.062110}%
	\BibitemOpen
	\bibfield  {author} {\bibinfo {author} {\bibfnamefont {M.}~\bibnamefont
			{Zwolak}}, \bibinfo {author} {\bibfnamefont {H.~T.}\ \bibnamefont {Quan}},\
		and\ \bibinfo {author} {\bibfnamefont {W.~H.}\ \bibnamefont {Zurek}},\
	}\bibfield  {title} {\bibinfo {title} {Redundant imprinting of information in
			nonideal environments: Objective reality via a noisy channel},\ }\href
	{https://doi.org/10.1103/PhysRevA.81.062110} {\bibfield  {journal} {\bibinfo
			{journal} {Phys. Rev. A}\ }\textbf {\bibinfo {volume} {81}},\ \bibinfo
		{pages} {062110} (\bibinfo {year} {2010})}\BibitemShut {NoStop}%
	\bibitem [{\citenamefont {Zwolak}\ \emph
		{et~al.}(2016{\natexlab{b}})\citenamefont {Zwolak}, \citenamefont {Riedel},\
		and\ \citenamefont {Zurek}}]{Zwolak_2016}%
	\BibitemOpen
	\bibfield  {author} {\bibinfo {author} {\bibfnamefont {M.}~\bibnamefont
			{Zwolak}}, \bibinfo {author} {\bibfnamefont {C.~J.}\ \bibnamefont {Riedel}},\
		and\ \bibinfo {author} {\bibfnamefont {W.~H.}\ \bibnamefont {Zurek}},\
	}\bibfield  {title} {\bibinfo {title} {Amplification, decoherence and the
			acquisition of information by spin environments},\ }\href
	{https://doi.org/10.1038/srep25277} {\bibfield  {journal} {\bibinfo
			{journal} {Scientific Reports}\ }\textbf {\bibinfo {volume} {6}},\ \bibinfo
		{pages} {25277} (\bibinfo {year} {2016}{\natexlab{b}})}\BibitemShut {NoStop}%
	\bibitem [{\citenamefont {Mirkin}\ and\ \citenamefont
		{Wisniacki}(2021)}]{e_23111377}%
	\BibitemOpen
	\bibfield  {author} {\bibinfo {author} {\bibfnamefont {N.}~\bibnamefont
			{Mirkin}}\ and\ \bibinfo {author} {\bibfnamefont {D.~A.}\ \bibnamefont
			{Wisniacki}},\ }\bibfield  {title} {\bibinfo {title} {Many-body localization
			and the emergence of quantum darwinism},\ }\bibfield  {journal} {\bibinfo
		{journal} {Entropy}\ }\textbf {\bibinfo {volume} {23}},\ \href
	{https://doi.org/10.3390/e23111377} {10.3390/e23111377} (\bibinfo {year}
	{2021})\BibitemShut {NoStop}%
	\bibitem [{\citenamefont {Riedel}\ \emph {et~al.}(2012)\citenamefont {Riedel},
		\citenamefont {Zurek},\ and\ \citenamefont {Zwolak}}]{Jess_Riedel_2012}%
	\BibitemOpen
	\bibfield  {author} {\bibinfo {author} {\bibfnamefont {C.~J.}\ \bibnamefont
			{Riedel}}, \bibinfo {author} {\bibfnamefont {W.~H.}\ \bibnamefont {Zurek}},\
		and\ \bibinfo {author} {\bibfnamefont {M.}~\bibnamefont {Zwolak}},\
	}\bibfield  {title} {\bibinfo {title} {The rise and fall of redundancy in
			decoherence and quantum darwinism},\ }\href
	{https://doi.org/10.1088/1367-2630/14/8/083010} {\bibfield  {journal}
		{\bibinfo  {journal} {New Journal of Physics}\ }\textbf {\bibinfo {volume}
			{14}},\ \bibinfo {pages} {083010} (\bibinfo {year} {2012})}\BibitemShut
	{NoStop}%
	\bibitem [{\citenamefont {Campbell}\ \emph {et~al.}(2019)\citenamefont
		{Campbell}, \citenamefont {\ifmmode~\mbox{\c{C}}\else \c{C}\fi{}akmak},
		\citenamefont {M\"ustecapl\ifmmode \imath \else \i
			\fi{}o\ifmmode~\breve{g}\else \u{g}\fi{}lu}, \citenamefont {Paternostro},\
		and\ \citenamefont {Vacchini}}]{PhysRevA.99.042103}%
	\BibitemOpen
	\bibfield  {author} {\bibinfo {author} {\bibfnamefont {S.}~\bibnamefont
			{Campbell}}, \bibinfo {author} {\bibfnamefont {B.~i. e. i. f. m.~c.}\
			\bibnamefont {\ifmmode~\mbox{\c{C}}\else \c{C}\fi{}akmak}}, \bibinfo {author}
		{\bibfnamefont {O.~E.}\ \bibnamefont {M\"ustecapl\ifmmode \imath \else \i
				\fi{}o\ifmmode~\breve{g}\else \u{g}\fi{}lu}}, \bibinfo {author}
		{\bibfnamefont {M.}~\bibnamefont {Paternostro}},\ and\ \bibinfo {author}
		{\bibfnamefont {B.}~\bibnamefont {Vacchini}},\ }\bibfield  {title} {\bibinfo
		{title} {Collisional unfolding of quantum darwinism},\ }\href
	{https://doi.org/10.1103/PhysRevA.99.042103} {\bibfield  {journal} {\bibinfo
			{journal} {Phys. Rev. A}\ }\textbf {\bibinfo {volume} {99}},\ \bibinfo
		{pages} {042103} (\bibinfo {year} {2019})}\BibitemShut {NoStop}%
	\bibitem [{\citenamefont {Mironowicz}\ \emph {et~al.}(2018)\citenamefont
		{Mironowicz}, \citenamefont {Nale\ifmmode~\dot{z}\else \.{z}\fi{}yty},
		\citenamefont {Horodecki},\ and\ \citenamefont
		{Korbicz}}]{PhysRevA_98.022124}%
	\BibitemOpen
	\bibfield  {author} {\bibinfo {author} {\bibfnamefont {P.}~\bibnamefont
			{Mironowicz}}, \bibinfo {author} {\bibfnamefont {P.}~\bibnamefont
			{Nale\ifmmode~\dot{z}\else \.{z}\fi{}yty}}, \bibinfo {author} {\bibfnamefont
			{P.}~\bibnamefont {Horodecki}},\ and\ \bibinfo {author} {\bibfnamefont
			{J.~K.}\ \bibnamefont {Korbicz}},\ }\bibfield  {title} {\bibinfo {title}
		{System information propagation for composite structures},\ }\href
	{https://doi.org/10.1103/PhysRevA.98.022124} {\bibfield  {journal} {\bibinfo
			{journal} {Phys. Rev. A}\ }\textbf {\bibinfo {volume} {98}},\ \bibinfo
		{pages} {022124} (\bibinfo {year} {2018})}\BibitemShut {NoStop}%
	\bibitem [{\citenamefont {Ryan}\ \emph {et~al.}(2021)\citenamefont {Ryan},
		\citenamefont {Paternostro},\ and\ \citenamefont
		{Campbell}}]{RYAN2021127675}%
	\BibitemOpen
	\bibfield  {author} {\bibinfo {author} {\bibfnamefont {E.}~\bibnamefont
			{Ryan}}, \bibinfo {author} {\bibfnamefont {M.}~\bibnamefont {Paternostro}},\
		and\ \bibinfo {author} {\bibfnamefont {S.}~\bibnamefont {Campbell}},\
	}\bibfield  {title} {\bibinfo {title} {Quantum darwinism in a structured spin
			environment},\ }\href
	{https://doi.org/https://doi.org/10.1016/j.physleta.2021.127675} {\bibfield
		{journal} {\bibinfo  {journal} {Physics Letters A}\ }\textbf {\bibinfo
			{volume} {416}},\ \bibinfo {pages} {127675} (\bibinfo {year}
		{2021})}\BibitemShut {NoStop}%
	\bibitem [{\citenamefont {Garc\'{\i}a-P\'erez}\ \emph
		{et~al.}(2020)\citenamefont {Garc\'{\i}a-P\'erez}, \citenamefont {Chisholm},
		\citenamefont {Rossi}, \citenamefont {Palma},\ and\ \citenamefont
		{Maniscalco}}]{PhysRevResearch.2.012061}%
	\BibitemOpen
	\bibfield  {author} {\bibinfo {author} {\bibfnamefont {G.}~\bibnamefont
			{Garc\'{\i}a-P\'erez}}, \bibinfo {author} {\bibfnamefont {D.~A.}\
			\bibnamefont {Chisholm}}, \bibinfo {author} {\bibfnamefont {M.~A.~C.}\
			\bibnamefont {Rossi}}, \bibinfo {author} {\bibfnamefont {G.~M.}\ \bibnamefont
			{Palma}},\ and\ \bibinfo {author} {\bibfnamefont {S.}~\bibnamefont
			{Maniscalco}},\ }\bibfield  {title} {\bibinfo {title} {Decoherence without
			entanglement and quantum darwinism},\ }\href
	{https://doi.org/10.1103/PhysRevResearch.2.012061} {\bibfield  {journal}
		{\bibinfo  {journal} {Phys. Rev. Res.}\ }\textbf {\bibinfo {volume} {2}},\
		\bibinfo {pages} {012061} (\bibinfo {year} {2020})}\BibitemShut {NoStop}%
	\bibitem [{\citenamefont {Lorenzo}\ \emph {et~al.}(2020)\citenamefont
		{Lorenzo}, \citenamefont {Paternostro},\ and\ \citenamefont
		{Palma}}]{PhysRevResearch.2.013164}%
	\BibitemOpen
	\bibfield  {author} {\bibinfo {author} {\bibfnamefont {S.}~\bibnamefont
			{Lorenzo}}, \bibinfo {author} {\bibfnamefont {M.}~\bibnamefont
			{Paternostro}},\ and\ \bibinfo {author} {\bibfnamefont {G.~M.}\ \bibnamefont
			{Palma}},\ }\bibfield  {title} {\bibinfo {title} {Anti-zeno-based dynamical
			control of the unfolding of quantum darwinism},\ }\href
	{https://doi.org/10.1103/PhysRevResearch.2.013164} {\bibfield  {journal}
		{\bibinfo  {journal} {Phys. Rev. Res.}\ }\textbf {\bibinfo {volume} {2}},\
		\bibinfo {pages} {013164} (\bibinfo {year} {2020})}\BibitemShut {NoStop}%
	\bibitem [{\citenamefont {Kici\ifmmode~\acute{n}\else \'{n}\fi{}ski}\ and\
		\citenamefont {Korbicz}(2021)}]{PhysRevA.104.042216}%
	\BibitemOpen
	\bibfield  {author} {\bibinfo {author} {\bibfnamefont {M.}~\bibnamefont
			{Kici\ifmmode~\acute{n}\else \'{n}\fi{}ski}}\ and\ \bibinfo {author}
		{\bibfnamefont {J.~K.}\ \bibnamefont {Korbicz}},\ }\bibfield  {title}
	{\bibinfo {title} {Decoherence and objectivity in higher spin environments},\
	}\href {https://doi.org/10.1103/PhysRevA.104.042216} {\bibfield  {journal}
		{\bibinfo  {journal} {Phys. Rev. A}\ }\textbf {\bibinfo {volume} {104}},\
		\bibinfo {pages} {042216} (\bibinfo {year} {2021})}\BibitemShut {NoStop}%
	\bibitem [{\citenamefont {Lampo}\ \emph {et~al.}(2017)\citenamefont {Lampo},
		\citenamefont {Tuziemski}, \citenamefont {Lewenstein},\ and\ \citenamefont
		{Korbicz}}]{PhysRevA.96.012120}%
	\BibitemOpen
	\bibfield  {author} {\bibinfo {author} {\bibfnamefont {A.}~\bibnamefont
			{Lampo}}, \bibinfo {author} {\bibfnamefont {J.}~\bibnamefont {Tuziemski}},
		\bibinfo {author} {\bibfnamefont {M.}~\bibnamefont {Lewenstein}},\ and\
		\bibinfo {author} {\bibfnamefont {J.~K.}\ \bibnamefont {Korbicz}},\
	}\bibfield  {title} {\bibinfo {title} {Objectivity in the non-markovian
			spin-boson model},\ }\href {https://doi.org/10.1103/PhysRevA.96.012120}
	{\bibfield  {journal} {\bibinfo  {journal} {Phys. Rev. A}\ }\textbf {\bibinfo
			{volume} {96}},\ \bibinfo {pages} {012120} (\bibinfo {year}
		{2017})}\BibitemShut {NoStop}%
	\bibitem [{\citenamefont {Balaneskovi{\'{c}}}(2015)}]{Balaneskovic2015}%
	\BibitemOpen
	\bibfield  {author} {\bibinfo {author} {\bibfnamefont {N.}~\bibnamefont
			{Balaneskovi{\'{c}}}},\ }\bibfield  {title} {\bibinfo {title} {Random unitary
			evolution model of quantum darwinism with pure decoherence},\ }\href
	{https://doi.org/10.1140/epjd/e2015-60319-9} {\bibfield  {journal} {\bibinfo
			{journal} {The European Physical Journal D}\ }\textbf {\bibinfo {volume}
			{69}},\ \bibinfo {pages} {232} (\bibinfo {year} {2015})}\BibitemShut
	{NoStop}%
	\bibitem [{\citenamefont {Balaneskovic}\ and\ \citenamefont
		{Mendler}(2016)}]{Balaneskovic2016}%
	\BibitemOpen
	\bibfield  {author} {\bibinfo {author} {\bibfnamefont {N.}~\bibnamefont
			{Balaneskovic}}\ and\ \bibinfo {author} {\bibfnamefont {M.}~\bibnamefont
			{Mendler}},\ }\bibfield  {title} {\bibinfo {title} {Dissipation, dephasing
			and quantum darwinism in qubit systems with random unitary interactions},\
	}\href {https://doi.org/10.1140/epjd/e2016-70174-9} {\bibfield  {journal}
		{\bibinfo  {journal} {The European Physical Journal D}\ }\textbf {\bibinfo
			{volume} {70}},\ \bibinfo {pages} {177} (\bibinfo {year} {2016})}\BibitemShut
	{NoStop}%
	\bibitem [{\citenamefont {Garc{\'i}a-P{\'e}rez}\ \emph
		{et~al.}(2020)\citenamefont {Garc{\'i}a-P{\'e}rez}, \citenamefont {Rossi},\
		and\ \citenamefont {Maniscalco}}]{GP2020}%
	\BibitemOpen
	\bibfield  {author} {\bibinfo {author} {\bibfnamefont {G.}~\bibnamefont
			{Garc{\'i}a-P{\'e}rez}}, \bibinfo {author} {\bibfnamefont {M.~A.~C.}\
			\bibnamefont {Rossi}},\ and\ \bibinfo {author} {\bibfnamefont
			{S.}~\bibnamefont {Maniscalco}},\ }\bibfield  {title} {\bibinfo {title} {Ibm
			q experience as a versatile experimental testbed for simulating open quantum
			systems},\ }\href {https://doi.org/10.1038/s41534-019-0235-y} {\bibfield
		{journal} {\bibinfo  {journal} {npj Quantum Information}\ }\textbf {\bibinfo
			{volume} {6}},\ \bibinfo {pages} {1} (\bibinfo {year} {2020})}\BibitemShut
	{NoStop}%
	\bibitem [{\citenamefont {Unden}\ \emph {et~al.}(2019)\citenamefont {Unden},
		\citenamefont {Louzon}, \citenamefont {Zwolak}, \citenamefont {Zurek},\ and\
		\citenamefont {Jelezko}}]{PhysRevLett.123.140402}%
	\BibitemOpen
	\bibfield  {author} {\bibinfo {author} {\bibfnamefont {T.~K.}\ \bibnamefont
			{Unden}}, \bibinfo {author} {\bibfnamefont {D.}~\bibnamefont {Louzon}},
		\bibinfo {author} {\bibfnamefont {M.}~\bibnamefont {Zwolak}}, \bibinfo
		{author} {\bibfnamefont {W.~H.}\ \bibnamefont {Zurek}},\ and\ \bibinfo
		{author} {\bibfnamefont {F.}~\bibnamefont {Jelezko}},\ }\bibfield  {title}
	{\bibinfo {title} {Revealing the emergence of classicality using
			nitrogen-vacancy centers},\ }\href
	{https://doi.org/10.1103/PhysRevLett.123.140402} {\bibfield  {journal}
		{\bibinfo  {journal} {Phys. Rev. Lett.}\ }\textbf {\bibinfo {volume} {123}},\
		\bibinfo {pages} {140402} (\bibinfo {year} {2019})}\BibitemShut {NoStop}%
	\bibitem [{\citenamefont {Chen}\ \emph {et~al.}(2019)\citenamefont {Chen},
		\citenamefont {Zhong}, \citenamefont {Li}, \citenamefont {Wu}, \citenamefont
		{Wang}, \citenamefont {Li}, \citenamefont {Liu}, \citenamefont {Lu},\ and\
		\citenamefont {Pan}}]{CHEN2019580}%
	\BibitemOpen
	\bibfield  {author} {\bibinfo {author} {\bibfnamefont {M.-C.}\ \bibnamefont
			{Chen}}, \bibinfo {author} {\bibfnamefont {H.-S.}\ \bibnamefont {Zhong}},
		\bibinfo {author} {\bibfnamefont {Y.}~\bibnamefont {Li}}, \bibinfo {author}
		{\bibfnamefont {D.}~\bibnamefont {Wu}}, \bibinfo {author} {\bibfnamefont
			{X.-L.}\ \bibnamefont {Wang}}, \bibinfo {author} {\bibfnamefont
			{L.}~\bibnamefont {Li}}, \bibinfo {author} {\bibfnamefont {N.-L.}\
			\bibnamefont {Liu}}, \bibinfo {author} {\bibfnamefont {C.-Y.}\ \bibnamefont
			{Lu}},\ and\ \bibinfo {author} {\bibfnamefont {J.-W.}\ \bibnamefont {Pan}},\
	}\bibfield  {title} {\bibinfo {title} {Emergence of classical objectivity of
			quantum darwinism in a photonic quantum simulator},\ }\href
	{https://doi.org/https://doi.org/10.1016/j.scib.2019.03.032} {\bibfield
		{journal} {\bibinfo  {journal} {Science Bulletin}\ }\textbf {\bibinfo
			{volume} {64}},\ \bibinfo {pages} {580} (\bibinfo {year} {2019})}\BibitemShut
	{NoStop}%
	\bibitem [{\citenamefont {Ciampini}\ \emph {et~al.}(2018)\citenamefont
		{Ciampini}, \citenamefont {Pinna}, \citenamefont {Mataloni},\ and\
		\citenamefont {Paternostro}}]{PhysRevA.98.020101}%
	\BibitemOpen
	\bibfield  {author} {\bibinfo {author} {\bibfnamefont {M.~A.}\ \bibnamefont
			{Ciampini}}, \bibinfo {author} {\bibfnamefont {G.}~\bibnamefont {Pinna}},
		\bibinfo {author} {\bibfnamefont {P.}~\bibnamefont {Mataloni}},\ and\
		\bibinfo {author} {\bibfnamefont {M.}~\bibnamefont {Paternostro}},\
	}\bibfield  {title} {\bibinfo {title} {Experimental signature of quantum
			darwinism in photonic cluster states},\ }\href
	{https://doi.org/10.1103/PhysRevA.98.020101} {\bibfield  {journal} {\bibinfo
			{journal} {Phys. Rev. A}\ }\textbf {\bibinfo {volume} {98}},\ \bibinfo
		{pages} {020101} (\bibinfo {year} {2018})}\BibitemShut {NoStop}%
	\bibitem [{\citenamefont {Brunner}\ \emph {et~al.}(2008)\citenamefont
		{Brunner}, \citenamefont {Akis}, \citenamefont {Ferry}, \citenamefont
		{Kuchar},\ and\ \citenamefont {Meisels}}]{PhysRevLett.101.024102}%
	\BibitemOpen
	\bibfield  {author} {\bibinfo {author} {\bibfnamefont {R.}~\bibnamefont
			{Brunner}}, \bibinfo {author} {\bibfnamefont {R.}~\bibnamefont {Akis}},
		\bibinfo {author} {\bibfnamefont {D.~K.}\ \bibnamefont {Ferry}}, \bibinfo
		{author} {\bibfnamefont {F.}~\bibnamefont {Kuchar}},\ and\ \bibinfo {author}
		{\bibfnamefont {R.}~\bibnamefont {Meisels}},\ }\bibfield  {title} {\bibinfo
		{title} {Coupling-induced bipartite pointer states in arrays of electron
			billiards: Quantum darwinism in action?},\ }\href
	{https://doi.org/10.1103/PhysRevLett.101.024102} {\bibfield  {journal}
		{\bibinfo  {journal} {Phys. Rev. Lett.}\ }\textbf {\bibinfo {volume} {101}},\
		\bibinfo {pages} {024102} (\bibinfo {year} {2008})}\BibitemShut {NoStop}%
	\bibitem [{\citenamefont {Burke}\ \emph {et~al.}(2010)\citenamefont {Burke},
		\citenamefont {Akis}, \citenamefont {Day}, \citenamefont {Speyer},
		\citenamefont {Ferry},\ and\ \citenamefont
		{Bennett}}]{PhysRevLett.104.176801}%
	\BibitemOpen
	\bibfield  {author} {\bibinfo {author} {\bibfnamefont {A.~M.}\ \bibnamefont
			{Burke}}, \bibinfo {author} {\bibfnamefont {R.}~\bibnamefont {Akis}},
		\bibinfo {author} {\bibfnamefont {T.~E.}\ \bibnamefont {Day}}, \bibinfo
		{author} {\bibfnamefont {G.}~\bibnamefont {Speyer}}, \bibinfo {author}
		{\bibfnamefont {D.~K.}\ \bibnamefont {Ferry}},\ and\ \bibinfo {author}
		{\bibfnamefont {B.~R.}\ \bibnamefont {Bennett}},\ }\bibfield  {title}
	{\bibinfo {title} {Periodic scarred states in open quantum dots as evidence
			of quantum darwinism},\ }\href
	{https://doi.org/10.1103/PhysRevLett.104.176801} {\bibfield  {journal}
		{\bibinfo  {journal} {Phys. Rev. Lett.}\ }\textbf {\bibinfo {volume} {104}},\
		\bibinfo {pages} {176801} (\bibinfo {year} {2010})}\BibitemShut {NoStop}%
	\bibitem [{\citenamefont {Everett}(1957)}]{RevModPhys.29.454}%
	\BibitemOpen
	\bibfield  {author} {\bibinfo {author} {\bibfnamefont {H.}~\bibnamefont
			{Everett}},\ }\bibfield  {title} {\bibinfo {title} {"relative state"
			formulation of quantum mechanics},\ }\href
	{https://doi.org/10.1103/RevModPhys.29.454} {\bibfield  {journal} {\bibinfo
			{journal} {Rev. Mod. Phys.}\ }\textbf {\bibinfo {volume} {29}},\ \bibinfo
		{pages} {454} (\bibinfo {year} {1957})}\BibitemShut {NoStop}%
	\bibitem [{\citenamefont {DeWitt}(1970)}]{DeWitt1970}%
	\BibitemOpen
	\bibfield  {author} {\bibinfo {author} {\bibfnamefont {B.~S.}\ \bibnamefont
			{DeWitt}},\ }\bibfield  {title} {\bibinfo {title} {{Quantum mechanics and
				reality}},\ }\href {https://doi.org/10.1063/1.3022331} {\bibfield  {journal}
		{\bibinfo  {journal} {Physics Today}\ }\textbf {\bibinfo {volume} {23}},\
		\bibinfo {pages} {30} (\bibinfo {year} {1970})},\ \Eprint
	{https://arxiv.org/abs/https://pubs.aip.org/physicstoday/article-pdf/23/9/30/8272650/30\_1\_online.pdf}
	{https://pubs.aip.org/physicstoday/article-pdf/23/9/30/8272650/30\_1\_online.pdf}
	\BibitemShut {NoStop}%
	\bibitem [{\citenamefont {Barrett}(2005)}]{Barrett2005-BARTPP}%
	\BibitemOpen
	\bibfield  {author} {\bibinfo {author} {\bibfnamefont {J.~A.}\ \bibnamefont
			{Barrett}},\ }\bibfield  {title} {\bibinfo {title} {The preferred-basis
			problem and the quantum mechanics of everything},\ }\href
	{https://doi.org/10.1093/bjps/axi114} {\bibfield  {journal} {\bibinfo
			{journal} {British Journal for the Philosophy of Science}\ }\textbf {\bibinfo
			{volume} {56}},\ \bibinfo {pages} {199} (\bibinfo {year} {2005})}\BibitemShut
	{NoStop}%
	\bibitem [{\citenamefont {Zurek}(1981)}]{PhysRevD.24.1516}%
	\BibitemOpen
	\bibfield  {author} {\bibinfo {author} {\bibfnamefont {W.~H.}\ \bibnamefont
			{Zurek}},\ }\bibfield  {title} {\bibinfo {title} {Pointer basis of quantum
			apparatus: Into what mixture does the wave packet collapse?},\ }\href
	{https://doi.org/10.1103/PhysRevD.24.1516} {\bibfield  {journal} {\bibinfo
			{journal} {Phys. Rev. D}\ }\textbf {\bibinfo {volume} {24}},\ \bibinfo
		{pages} {1516} (\bibinfo {year} {1981})}\BibitemShut {NoStop}%
	\bibitem [{\citenamefont {Inamori}(2016)}]{Inamori}%
	\BibitemOpen
	\bibfield  {author} {\bibinfo {author} {\bibfnamefont {H.}~\bibnamefont
			{Inamori}},\ }\bibfield  {title} {\bibinfo {title} {No quantum process can
			explain the existence of the preferred basis: Decoherence is not universal},\
	}\href {https://doi.org/10.4236/jqis.2016.63014} {\bibfield  {journal}
		{\bibinfo  {journal} {Journal of Quantum Information Science}\ }\textbf
		{\bibinfo {volume} {6}},\ \bibinfo {pages} {214} (\bibinfo {year}
		{2016})}\BibitemShut {NoStop}%
	\bibitem [{\citenamefont {Uhlmann}(1976)}]{UHLMANN1976273}%
	\BibitemOpen
	\bibfield  {author} {\bibinfo {author} {\bibfnamefont {A.}~\bibnamefont
			{Uhlmann}},\ }\bibfield  {title} {\bibinfo {title} {The “transition
			probability” in the state space of $a^\ast$-algebra},\ }\href
	{https://doi.org/https://doi.org/10.1016/0034-4877(76)90060-4} {\bibfield
		{journal} {\bibinfo  {journal} {Reports on Mathematical Physics}\ }\textbf
		{\bibinfo {volume} {9}},\ \bibinfo {pages} {273} (\bibinfo {year}
		{1976})}\BibitemShut {NoStop}%
	\bibitem [{\citenamefont {Liang}\ \emph {et~al.}(2019)\citenamefont {Liang},
		\citenamefont {Yeh}, \citenamefont {Mendonça}, \citenamefont {Teh},
		\citenamefont {Reid},\ and\ \citenamefont {Drummond}}]{Liang_2019}%
	\BibitemOpen
	\bibfield  {author} {\bibinfo {author} {\bibfnamefont {Y.-C.}\ \bibnamefont
			{Liang}}, \bibinfo {author} {\bibfnamefont {Y.-H.}\ \bibnamefont {Yeh}},
		\bibinfo {author} {\bibfnamefont {P.~E. M.~F.}\ \bibnamefont {Mendonça}},
		\bibinfo {author} {\bibfnamefont {R.~Y.}\ \bibnamefont {Teh}}, \bibinfo
		{author} {\bibfnamefont {M.~D.}\ \bibnamefont {Reid}},\ and\ \bibinfo
		{author} {\bibfnamefont {P.~D.}\ \bibnamefont {Drummond}},\ }\bibfield
	{title} {\bibinfo {title} {Quantum fidelity measures for mixed states},\
	}\href {https://doi.org/10.1088/1361-6633/ab1ca4} {\bibfield  {journal}
		{\bibinfo  {journal} {Reports on Progress in Physics}\ }\textbf {\bibinfo
			{volume} {82}},\ \bibinfo {pages} {076001} (\bibinfo {year}
		{2019})}\BibitemShut {NoStop}%
	\bibitem [{\citenamefont {Jozsa}(1994)}]{Jozsa1994}%
	\BibitemOpen
	\bibfield  {author} {\bibinfo {author} {\bibfnamefont {R.}~\bibnamefont
			{Jozsa}},\ }\bibfield  {title} {\bibinfo {title} {Fidelity for mixed quantum
			states},\ }\href {https://doi.org/10.1080/09500349414552171} {\bibfield
		{journal} {\bibinfo  {journal} {Journal of Modern Optics}\ }\textbf {\bibinfo
			{volume} {41}},\ \bibinfo {pages} {2315} (\bibinfo {year}
		{1994})}\BibitemShut {NoStop}%
	\bibitem [{\citenamefont {Aharonov}\ \emph {et~al.}(2013)\citenamefont
		{Aharonov}, \citenamefont {Popescu}, \citenamefont {Rohrlich},\ and\
		\citenamefont {Skrzypczyk}}]{Aharonov_2013}%
	\BibitemOpen
	\bibfield  {author} {\bibinfo {author} {\bibfnamefont {Y.}~\bibnamefont
			{Aharonov}}, \bibinfo {author} {\bibfnamefont {S.}~\bibnamefont {Popescu}},
		\bibinfo {author} {\bibfnamefont {D.}~\bibnamefont {Rohrlich}},\ and\
		\bibinfo {author} {\bibfnamefont {P.}~\bibnamefont {Skrzypczyk}},\ }\bibfield
	{title} {\bibinfo {title} {Quantum cheshire cats},\ }\href
	{https://doi.org/10.1088/1367-2630/15/11/113015} {\bibfield  {journal}
		{\bibinfo  {journal} {New Journal of Physics}\ }\textbf {\bibinfo {volume}
			{15}},\ \bibinfo {pages} {113015} (\bibinfo {year} {2013})}\BibitemShut
	{NoStop}%
	\bibitem [{\citenamefont {Denkmayr}\ \emph {et~al.}(2014)\citenamefont
		{Denkmayr}, \citenamefont {Geppert}, \citenamefont {Sponar}, \citenamefont
		{Lemmel}, \citenamefont {Matzkin}, \citenamefont {Tollaksen},\ and\
		\citenamefont {Hasegawa}}]{Denkmayr2014}%
	\BibitemOpen
	\bibfield  {author} {\bibinfo {author} {\bibfnamefont {T.}~\bibnamefont
			{Denkmayr}}, \bibinfo {author} {\bibfnamefont {H.}~\bibnamefont {Geppert}},
		\bibinfo {author} {\bibfnamefont {S.}~\bibnamefont {Sponar}}, \bibinfo
		{author} {\bibfnamefont {H.}~\bibnamefont {Lemmel}}, \bibinfo {author}
		{\bibfnamefont {A.}~\bibnamefont {Matzkin}}, \bibinfo {author} {\bibfnamefont
			{J.}~\bibnamefont {Tollaksen}},\ and\ \bibinfo {author} {\bibfnamefont
			{Y.}~\bibnamefont {Hasegawa}},\ }\bibfield  {title} {\bibinfo {title}
		{Observation of a quantum cheshire cat in a matter-wave interferometer
			experiment},\ }\href {https://doi.org/10.1038/ncomms5492} {\bibfield
		{journal} {\bibinfo  {journal} {Nature Communications}\ }\textbf {\bibinfo
			{volume} {5}},\ \bibinfo {pages} {4492} (\bibinfo {year} {2014})}\BibitemShut
	{NoStop}%
\end{thebibliography}

\end{document}